\documentclass[11pt,letterpaper]{article}

\usepackage[english]{babel}

\usepackage{graphicx} % Required for inserting images
\usepackage[margin=1in]{geometry}
\usepackage[english]{babel}

\usepackage{xspace}
\usepackage{caption}
\usepackage{subcaption}
\usepackage{amssymb}
\usepackage{natbib}
\usepackage{amsmath}
\usepackage{amsthm}
\usepackage{graphicx}
\usepackage{multicol}
\usepackage{diagbox}
\usepackage{etoolbox}
\usepackage{pgfplots}
\pgfplotsset{compat=1.18}
\usepackage[colorlinks=true, allcolors=teal]{hyperref}
\usepackage{enumerate}    
\usepackage[ruled,linesnumbered,vlined]{algorithm2e}
\newtheorem{theorem}{Theorem}[section]

\newtheorem{lemma}[theorem]{Lemma}

\newtheorem{claim}[theorem]{Claim}
\newtheorem{definition}[theorem]{Definition}

\newtheorem{observation}[theorem]{Observation}

\usepackage{tikz}
\usepackage{siunitx}
\usepgfplotslibrary{fillbetween}
\usepackage{framed}
\definecolor{shadecolor}{gray}{0.85}
\usepackage{tcolorbox}
\tcbset{
  myframe/.style={
    colback=gray!5,
    colframe=black,
    boxrule=0.8pt,
    arc=4pt,
    outer arc=4pt,
    boxsep=5pt,
    left=5pt,
    right=5pt,
    top=5pt,
    bottom=5pt,
    enhanced,
    sharp corners
  }
}
\DeclareFontShape{OT1}{cmr}{m}{scit}{
    <-> ssub * cmr/m/sc % 替换为小体
}{}
\newenvironment{proofof}[1]{{\vspace*{5pt} \noindent\bf Proof of #1:  }}{\hfill\rule{2mm}{2mm}\vspace*{5pt}}

\newcommand{\ALG}{\mathsf{ALG}}
\newcommand{\OPT}{\mathsf{OPT}}
\newcommand{\E}{\mathbb{E}}
\newcommand{\LP}{\mathsf{LP}}

\newcommand{\bx}{\boldsymbol{x}}
\renewcommand{\d}{\mathrm{d}}

\title{Online Matching with KIID Edge Arrivals}
\author{Yilong Feng \thanks{State Key Lab of IOTSC, University of Macau. \{yc37459,yc47435,xiaoweiwu\}@um.edu.mo. The authors are ordered alphabetically. Xiaowei Wu is funded by the Science and Technology Development Fund (FDCT), Macau SAR (file no. 0147/2024/RIA2, 001/2024/SKL, 0002/2025/EQP and CG2026-IOTSC), and University of Macau (file no. MYRG-GRG2025-00033-IOTSC).}
	\and Haolong Li $^*$
        \and Xiaowei Wu $^*$}
\date{}

\begin{document}

\begin{titlepage}
\maketitle
\thispagestyle{empty}

\begin{abstract}
In the classic online stochastic matching proposed by Feldman et al. (FOCS 2009), there is a known bipartite type-graph, where one side of the graph is given offline.
Upon the arrival of each online vertex, its type is sampled independently and identically from the other side of the type-graph.
This model has been extensively studied over the past decade, yielding a rich body of theoretical results.
In this paper, we initiate the study of an edge arrival model for online stochastic matching. 
In our model, the online edges are sampled independently and identically (KIID) from a known type-graph, which need not be bipartite. 

We first show that the Greedy algorithm cannot achieve a competitive ratio strictly better than $0.5$ while the Suggested Matching algorithm has a competitive ratio of $1-1/e$ under the assumption of integral arrival rates, matching its performance in the one-sided vertex arrival model.
We then propose a two-stage algorithm that combines Greedy and Suggested Matching, and show that its competitive ratio is strictly higher than $1-1/e$ for integral arrival rates.
While our algorithm is simple, its analysis is intricate and builds upon the Natural LP, which has been proven very powerful in vertex arrival models.
Our result reveals that even in the more challenging edge arrival setting for general graphs, competitive ratios better than $1-1/e$ are still possible, given the known distributions.
\end{abstract}

\end{titlepage}

% Paper body
\section{Introduction}

Online matching, along with its variants, has been studied for decades since the seminal work of \citet{conf/stoc/KarpVV90}.
Consider an underlying bipartite graph where the vertices of one side are known upfront, while those on the other side arrive online in an arbitrary order.
Upon the arrival of an online vertex, the online algorithm needs to decide whether to match it immediately, to maximize the size of the final matching.
The problem is known as \emph{Online Bipartite Matching}, for which \citet{conf/stoc/KarpVV90} proposed an optimal $(1-1/e)$-competitive algorithm called \textsc{Ranking}.
It regained enormous attention in the early 2000s due to its explicit connection to online advertising~\citep{journals/fttcs/Mehta13}, where advertisers are modeled as offline vertices, while ad slots correspond to online vertices.
Nevertheless, the assumption of adversarial arrival order is often criticized for being too pessimistic.

\citet{conf/focs/FeldmanMMM09} introduced the \emph{Online Stochastic Matching} problem, where online vertices are sampled independently and identically from a known distribution represented by a type-graph.
This model aligns particularly well with the context of online advertising, as the expected number of ad slots (online vertices) can be predicted by historical data.
They proposed a $(1-1/e)$-competitive algorithm called \textsc{Suggested Matching} that makes matching decisions based on an offline solution.
They then improved upon this (beat $1-1/e$) under the assumption of \emph{integral arrival rates}, where the arrival rate (expected appearances in the realized graph) of each online vertex type is integral.
Without this assumption, \citet{journals/mor/ManshadiGS12}, who designed a $0.702$-competitive algorithm, were the first to surpass $1-1/e$.
The ratio was later improved by subsequent works \citep{journals/mor/JailletL14,conf/stoc/HuangS21}, resulting in the state-of-the-art ratio of $0.716$ by \citet{conf/stoc/0002SY22}.
The current best (smallest) upper bound is $1-\frac{e}{e^e}\approx 0.8206$, shown by \citet{journals/corr/ChierichettiGPV25} recently.

However, all the above results are established on the assumption that the vertices arrive online (and initiate the matching), which fails to capture scenarios where vertices are offline while matching opportunities arrive online.
For example, consider a rental market with tenants and landlords (vertices on different sides).
A rental opportunity (edge) lies between a tenant and a landlord if their requirements are compatible.
Since such opportunities usually appear after viewings or price negotiations, it would be more reasonable to assume edges (rather than vertices) arrive online.

In this paper, we initiate the study of online stochastic matching with edge arrivals.
Suppose that the vertices $V$ of the type-graph $G(V,E)$ are given offline.
There are $m$ edges that arrive online\footnote{In other words, the total arrival rate of all edge types in $E$ equals $m$.} and draw their types independently following a known and identical distribution on $E$.
Upon the arrival of each edge, the online algorithm must decide whether to match it, conditioned on both of its endpoints being unmatched, to maximize the total size of the resulting matching.
Note that since in the edge arrival model all vertices are given offline, the model naturally extends to general graphs.
Throughout this paper, we use $M^*$ to denote a maximum matching on $G$.
Let $|M^*| = n$.

\subsection{Discussion on Natural Algorithms}

We first discuss some natural algorithms for the problem, and measure their competitive ratio, which is the infimum of $\E[\ALG]/\E[\OPT]$ over all online instances, where $\ALG$ is the size of matching produced by the online algorithm and $\OPT$ is the maximum matching size of the realized graph.

We begin with the naive \textsc{Greedy} algorithm, which matches an edge as long as both its endpoints are unmatched at its arrival.
This algorithm is at least $0.5$-competitive, as it always produces a maximal matching in the realized graph. 
While it completely disregards the structure of the type-graph and is clearly suboptimal, it still achieves a competitive ratio of $1-1/e$ under the vertex arrival model~\citep{conf/soda/GoelM08}.

However, we show that its competitive ratio under the edge arrival model is at most $0.5$.
We demonstrate that \textsc{Greedy} matches only half of the vertices when the type-graph is a \emph{sunflower} — a graph with $2n$ vertices, where $n$ of them form a complete graph, and each of the remaining $n$ vertices has a unique neighbor in the complete graph. Intuitively, since most edges are sampled from the complete graph, the algorithm quickly matches nearly all vertices in the complete graph in the early stage of the algorithm, leaving most of the remaining vertices unmatched.
Furthermore, by replacing each vertex with multiple copies and each edge with a complete bipartite graph between the copies of its endpoints, we show that \textsc{Greedy} still matches only half of the vertices in the resulting graph, while nearly all vertices could be matched in the realized graph in expectation (see Section~\ref{section:Greedy} for a formal proof).

Another widely studied class of algorithms for online stochastic matching leverages a (fractional) solution to a linear program (LP) that characterizes the offline optimal solution. Several LP formulations have been proposed for the online stochastic matching problem, including the Basic LP, the Jaillet-Lu LP~\citep{journals/mor/JailletL14,conf/wine/QiuFZW23,conf/soda/Yan24,Yan25}, and the more powerful Natural LP~\citep{journals/mp/TorricoAT18,conf/stoc/HuangS21,journals/teco/MaXX22,conf/stoc/0002SY22}, whose optimal values serve as upper bounds on $\E[\OPT]$.
Consider the \textsc{Suggested Matching} algorithm, proposed by \citet{conf/focs/FeldmanMMM09}. Given a feasible LP solution $\{x_e\}_{e\in E}$, the algorithm proposes to match an online edge $e$ with probability proportional to $x_e$.
In the vertex arrival model, upon the arrival of an online vertex $v$, all its incident edges are revealed, and each offline vertex $u$ is proposed with probability $x_{(u,v)}/\text{arrival rate of } v$.
$v$ is then matched to $u$ if $u$ is proposed and still unmatched.
By lower bounding the probability that an offline vertex remains unmatched, it can be shown that the algorithm achieves a competitive ratio of $1 - 1/e$. Notably, this guarantee holds per edge and thus extends to edge-weighted graphs.

In the edge arrival model, a natural variant of \textsc{Suggested Matching} proceeds as follows:
\begin{itemize}
    \item Compute a feasible solution $\bx$ to an LP whose optimum upper bounds $\E[\OPT]$;
    \item Upon the arrival of an edge $e$, if both of its endpoints are unmatched, we match it with probability $x_e/\text{arrival rate of } e$.
\end{itemize}

To establish a competitive ratio of $\alpha$, it suffices to show that each edge is included in the matching with probability at least $\alpha \cdot x_e$ (a guarantee analogous to that in the Online Contention Resolution Scheme (OCRS)~\citep{conf/soda/FeldmanSZ16}). 
However, unlike the vertex arrival model, we now require \emph{both} endpoints to be unmatched, which makes it harder to derive a good lower bound on the competitive ratio:
By a similar analysis as in \citep{conf/focs/FeldmanMMM09}, one can get a $0.432$ per-edge guarantee for \textsc{Suggested Matching} in the edge arrival model (see Appendix~\ref{appendix-discussion}).
Even through a more careful analysis, or employing advanced online rounding techniques, it is difficult to derive a per-edge guarantee better than $0.5$ for general $\bx$, as suggested by existing works on Random-order OCRS~\citep{journals/mor/PollnerRSW24,conf/stoc/MacRuryM24,journals/ior/MacRuryMG25}.
Considering this inherent difficulty, a realistic target for the KIID edge arrival model would be obtaining a competitive ratio strictly better than $0.5$.

Inspired by existing works, we take the first step towards solving this problem by focusing on the case with integral arrival rates.
Surprisingly, we find that $1-1/e$ is in fact achievable, by feeding \textsc{Suggested Matching} with a \emph{well-structured} solution $\bx$.
Interestingly, rather than using solutions from powerful LPs (e.g., Jaillet-Lu or Natural LP), we find it more effective to use an indicator solution corresponding to a maximum matching $M^*$ of the type-graph, with $x_e = 1$ if $e \in M^*$, and $x_e = 0$ otherwise. 
That is, the algorithm only considers edges sampled from $M^*$, ignoring all others -- even if both endpoints are unmatched.
Since there is no contention among edges in $M^*$, it is straightforward to derive a competitive ratio of $1 - 1/e$ (see Section~\ref{section:suggested-matching} for a formal proof).

This leads to a natural question for the problem with integral arrival rates: 

\begin{quote}
    \emph{Can we achieve a competitive ratio strictly better than $1 - 1/e$?}
\end{quote}

Unfortunately, \textsc{Suggested Matching} cannot guarantee a ratio better than $1-1/e$:
Consider a type-graph being a complete graph on $2n$ vertices. In this case, it can be shown that $\mathbb{E}[\OPT] \approx n$, while \textsc{Suggested Matching} matches only $(1 - 1/e) \cdot n$ edges in expectation.

\paragraph{Remark:}
Notice that the problem differs significantly depending on whether the arrival rates are integral.
Take the edge-weighted online stochastic matching as an example. With the assumption of integral arrival rates, a competitive ratio of $0.667$ was derived by \citet{conf/wine/HaeuplerMZ11} (first to beat $1-1/e$); this was later improved to $0.705$ by \citet{journals/algorithmica/BrubachSSX20a}.
Without the assumption of integral arrival rates, however, progress has been more challenging: \citet{conf/stoc/0002SY22} established an upper bound (on the competitive ratio) of $0.703$, proving that the problem is strictly harder.
The $1-1/e$ ratio remained the state-of-the-art for more than a decade until \citet{conf/soda/Yan24} recently broke through with a $0.645$-competitive algorithm.

\subsection{Our Results}

In this work, we answer the above question affirmatively, by proposing the \textsc{Boosted Suggested Matching} algorithm, which combines \textsc{Suggested Matching} and \textsc{Greedy}.
We show that its competitive ratio is strictly larger than $1-1/e$ for the online matching with KIID edge arrivals under the assumption of integral arrival rates.

\paragraph{Main Idea of the Algorithm.}
Observe that the \emph{matching rate} of \textsc{Suggested Matching} is decreasing over time. 
After selecting $\beta\cdot n$ edges, the probability of an online edge being selected by the algorithm drops to $\frac{(1-\beta)n}{m}$, which is quite wasteful --- especially in the late stage of the algorithm when the type-graph is \emph{dense}, e.g., admits multiple maximum matchings.
In contrast, for such graphs, \textsc{Greedy} has a high matching rate, provided that a \emph{random} constant fraction of vertices remain unmatched.
Motivated by this insight, we boost the performance of \textsc{Suggested Matching} by switching to \textsc{Greedy} in the late stage of the algorithm.
Our approach is a two-stage algorithm parameterized by $\rho\in [0,1]$: we run \textsc{Suggested Matching} for the first $\rho\cdot m$ rounds, and switch to \textsc{Greedy} for the remaining rounds.

\begin{tcolorbox}[title=Boosted Suggested Matching]
% Initialize $M \gets \varnothing$. \\[0.2cm]
\textbf{For edge $e_k$ arriving in round $k \in \{ 1,2,\ldots,m \}$:}
\begin{enumerate}
  \item If $k \leq \rho\cdot m$: select the edge if $e_k \in M^*$ and both endpoints of $e_k$ are unmatched;
  \item If $k > \rho\cdot m$: select the edge if both endpoints of $e_k$ are unmatched.
\end{enumerate}
\end{tcolorbox}

% Note that our algorithm (parameterized by $\rho$) only relies on the maximum matching $M^*$ of the type-graph, and does not require the computation of any LP solutions.
%

\begin{tcolorbox}[colback = gray!15, colframe = gray!15]
{\bf Main Theorem.~}
The \textsc{Boosted Suggested Matching} algorithm achieves a competitive ratio of at least $0.6342$ for Online Matching with KIID Edge Arrivals.
\end{tcolorbox}

We further show that if the type-graph of the instance admits a perfect matching (which is without loss of generality in most vertex arrival online matching problems), then the competitive ratio of the algorithm improves to at least $0.6383$.
Finally, we complement our algorithmic result with a negative result showing that no online algorithm can achieve competitive ratios strictly better than $0.845$ under the edge arrival model.

\paragraph{Remark 1:}
The edge arrival model is the most general model for online matching, and is proposed as an important open problem by \citet{journals/fttcs/Mehta13}.
Yet it has not received sufficient attention due to a strong hardness result by \citet{conf/focs/GamlathKMSW19}, who showed that no algorithm, even randomized, can achieve a competitive ratio strictly higher than $0.5$.
To circumvent this challenge, existing works have explored alternative models like the random arrival model~\citep{conf/ipco/GuruganeshS17} and the stochastic realization model~\citep{conf/icalp/GravinTW21}, though most of these approaches yield competitive ratios only slightly above the $0.5$ threshold.
Our work proposes a more effective approach by leveraging known distributions, showing that significantly better competitive ratios can be achieved under this natural model. 
This is particularly important for practical online applications, as the adversarial arrival assumptions are often criticized for being overly pessimistic. 
Besides, our analysis might shed lights towards solving other edge arrival models.

\paragraph{Remark 2:}
Our algorithm falls into the class of two-stage (or multi-stage) algorithms, which has recently been applied to several online problems, e.g., edge-weighted online stochastic matching~\citep{conf/soda/Yan24,conf/wine/QiuFZW23,Yan25}, prophet inequality~\citep{journals/ai/FengLLWW25,journals/mor/Perez-SalazarST25}, and prophet matching~\citep{conf/soda/Chen0LT25}.

\subsection{Our Techniques}
\label{section:our-techniques}

As is typical in the analysis of two-stage algorithms, the main challenge lies in demonstrating that the more aggressive approach adopted in the second phase is beneficial, as opposed to the conservative strategy in the first phase.
In our analysis, we focus on measuring the probability of matching the online edge in each round.
Formally, we define the matching rate $r_k$ as $\frac{m}{n}$ times this probability in round $k$.
The competitive ratio can then be estimated by $\frac{1}{m}\cdot \sum_{k=1}^m r_k$, the average matching rate across all rounds.
It can be shown that for $k \leq \rho \cdot m$, the matching rate of \textsc{Boosted Suggested Matching} is $e^{-\frac{k}{m}}$, which decreases from $1$ to $e^{-\rho}$ as $k$ increases.
Ideally, if we can show that for $k > \rho \cdot m$, the matching rate $r_k$ (of \textsc{Greedy}) is always lower bounded by some constant $c$, then we can lower bound $\mathbb{E}[\ALG]$ by $\left( 1 - e^{-\rho} + (1 - \rho) \cdot c \right) \cdot n$.
% \begin{equation*} 
% \mathbb{E}[\ALG] \geq \left( 1 - e^{-\rho} + (1 - \rho) \cdot c \right) \cdot n. 
% \end{equation*} 

As long as $c > 1/e$, the above expression (with an optimized $\rho$) yields a lower bound strictly better than $1 - 1/e$ for the competitive ratio of the algorithm (see Figure~\ref{figure:matching-rate} for an example).

\begin{figure} [htb]
    \centering
    \resizebox{0.6\textwidth}{!}{
    \begin{tikzpicture}
    \begin{axis}[
    width=14cm,
    height=8cm,
    ylabel={$\text{Matching Rate}$},
    xlabel={${k}/{m}$},
    grid=major,
    xmin=0,
    xmax=1,
    ymin=0,
    ymax=1.1,
    scaled x ticks=false,
    scaled y ticks=false,
    xticklabel style={/pgf/number format/fixed},
    yticklabel style={/pgf/number format/fixed},
    xtick={0,0.2,0.4,0.6,0.8,1.0},
    xticklabels={0,0.2,0.4,0.6,0.8,1.0},
    ytick={0.2,0.4,0.6,0.8,1.0},
    yticklabels={0.2,0.4,0.6,0.8,1.0},
    legend pos=north east,
    tick align=outside,
    tick pos=left,
    clip=false, ]

    \addplot[blue, line width=1.8pt, smooth, domain=0:1, name path=exp] {exp(-x)};

    \addplot[red, line width=1.8pt, domain=0.69315:1, name path=const] {0.5};

    \node[black, below] at (axis cs:0.69315, -0.02) {$\rho$};
    \node[black, left] at (axis cs:-0.02, 0.5) {$c$};

    \addplot[red, mark=*, mark size=2.5pt, only marks] coordinates {(0.69315, 0.5)};

    \draw[red, dashed, thin] (axis cs:0.69315, 0.5) -- (axis cs:0.69315, 0);
    \draw[red, dashed, thin] (axis cs:0.69315, 0.5) -- (axis cs:0, 0.5);
    \path[name path=xaxis1] (axis cs:0,0.003) -- (axis cs:0.69315,0.003);
    \path[name path=xaxis2] (axis cs:0.69315,0.003) -- (axis cs:1,0.003);

    \addplot[gray!20] fill between [
    of=exp and xaxis1,
    soft clip={domain=0.002:0.69315} ];

    \addplot[gray!20] fill between [
    of=const and xaxis2,
    soft clip={domain=0.69315:0.998} ];

    \end{axis}
    \end{tikzpicture} }
    \caption{Illustration of the matching rate in different rounds, where the blue line represents the rate of \textsc{Suggested Matching}, and the red line represents (the lower bound of) that of \textsc{Greedy}. The shaded area represents a lower bound for the competitive ratio.}
    \label{figure:matching-rate}
\end{figure}

For example, if $c = 0.5$, then by setting $\rho = \ln(2) \approx 0.693$, we obtain a lower bound of $0.653$.
Thus, the main goal is to derive a constant $c > 1/e$ that lower bounds the matching rate in the second phase.
In general, this is not always possible: when the type-graph contains only $n$ edges, i.e., $E = M^*$, \textsc{Greedy} is equivalent to \textsc{Suggested Matching}, whose matching rate when $k \approx m$ is $1/e$.
However, in this case, our algorithm is already $1$-competitive since $\E[\OPT] = (1 - 1/e) \cdot n$.
Therefore, for breaking the $1-1/e$ barrier, it suffices to consider instances for which $\E[\OPT] \approx n$, which holds only when the type-graph has sufficiently many edges.
We show that for such instances, the matching rate of the second phase is lower bounded.

\paragraph{Lower Bounding Greedy.} 
Note that the algorithm we used in the second phase is a \emph{constrained} version of \textsc{Greedy}, with a fraction of the vertices ($1 - e^{-\rho}$ in expectation) being matched in the first phase.
Since directly lower bounding the matching rate in this constrained version is difficult, we use the unconstrained \textsc{Greedy} to approximate it and show that their matching sizes differ by a factor of at most $e^{-2\rho}$ (intuitively, as each vertex remains unmatched with probability $e^{-\rho}$ at the end of the first phase).
We then leverage the Natural LP, which upper bounds $\E[\OPT]$ and has an optimal value $\LP^* \approx n$.
Let $x_e$ be the variable associated with edge $e \in E$ in the LP. For every vertex $u$, we have the constraint of
\begin{equation} 
    \sum_{e \in S} x_e \leq 1 - e^{-|S|}, \qquad \forall S \subseteq E(u) \label{equation:LP_constraint} 
\end{equation} 
where $E(u)$ is the set of edges incident to $u$.
This constraint can be interpreted as: ``the contribution of edge set $S$ to matching $u$ is upper bounded by the probability that at least one of these edges arrives.''
Suppose the type-graph admits a perfect matching, i.e., $|V| = 2n$.
Given that $\LP^* \approx n$, we know that for almost all vertices $u$, we have $x_u = \sum_{e \in E(u)} x_e \approx 1$, implying that $u$ has many edges $e$ with non-zero $x_e$ by Constraint~\eqref{equation:LP_constraint}.
Therefore, if we run \textsc{Greedy} for $(1 - \rho) \cdot m$ rounds, e.g., $\rho = 0.99$, we can ensure that there are still many edges in the type-graph with both endpoints unmatched (which we call \emph{free edges}), resulting in a high matching rate.
Utilizing this observation, we show that \textsc{Boosted Suggested Matching} achieves a competitive ratio of at least $0.6383$ for type-graphs that admit a perfect matching.

\paragraph{Graphs without Perfect Matching.}
Unfortunately, in the edge arrival setting, it is no longer without loss of generality to assume that the type-graph admits a perfect matching.
This adds another layer of complexity to our analysis, as we can no longer argue that almost all vertices have $x_u \approx 1$ even if $\LP^* = n$ --- e.g., having $3n$ vertices with $x_u = 2/3$ also yields $\LP^* = n$.
% Indeed, it is easy to construct a sparse type-graph with $O(n)$ vertices and edges, for which the maximum matching $M^*$ has size $n$ and $\LP^* = n$.
% 
In this case, we need a special handling for vertices not matched in the maximum matching $M^*$.
Let $\hat{V} \subseteq V$ be the set of vertices matched in $M^*$.
We observe that such an instance must have sufficiently many, e.g., $\Theta(n)$, edges in the type-graph connecting a vertex in $\hat{V}$ to one not in $\hat{V}$.
Compared to edges connecting two vertices in $\hat{V}$, 
when using the unconstrained \textsc{Greedy} to lower bound the constrained version,
the contribution of edges between $\hat{V}$ and $V\setminus \hat{V}$ to the matching rate only needs to be scaled down by a factor of $e^{-\rho}$ (since vertices not in $\hat{V}$ cannot be matched in the first phase).
This allows us to derive a lower bound on the matching rate even when many edges exist between $\hat{V}$ and $V \setminus \hat{V}$, and show a lower bound of $0.6342$ on the competitive ratio of the algorithm.

\subsection{Other Related Works}

Due to the vast literature on online matching, in the following, we only review the most related works, e.g., on online stochastic matching and online matching with edge arrivals.
For a more comprehensive review of existing works, please refer to the recent survey by \citet{journals/sigecom/HuangTW24}.

\paragraph{Online Stochastic Matching with Integral Arrival Rates}
\citet{conf/focs/FeldmanMMM09} initially established a competitive ratio of $0.67$, and subsequent works~\citep{conf/esa/BahmaniK10,conf/wine/HaeuplerMZ11,journals/mor/ManshadiGS12,journals/mor/JailletL14,journals/algorithmica/BrubachSSX20a} gradually improved upon this.
The highest ratio is $0.7341$ given by \citet{xu2025}.
Meanwhile, the state-of-the-art upper bound with the integral arrival rates remains $0.8647$, as shown by \citet{journals/mor/ManshadiGS12}.

\paragraph{Online Stochastic Matching with Non-IID Vertex Arrivals.}
\citet{conf/stoc/TangWW22} generalized the online stochastic matching, allowing the distribution to be time-dependent.
They gave a $0.666$-competitive algorithm.
The ratio was recently improved to $0.69$ by \citet{conf/focs/Chen0S24}.
Note that both ratios also hold for the vertex-weighted case.
Online stochastic matching with correlated vertex arrivals was also studied by \citet{conf/sigecom/AouadM23}.

\paragraph{Weighted Online Stochastic Matching.}
The online stochastic matching on weighted graphs has also been considered extensively. 
\citet{journals/mor/JailletL14} first considered the vertex-weighted case of the problem with integral arrival rates, where offline vertices may have different weights, and proposed a $0.725$-competitive algorithm.
\citet{journals/algorithmica/BrubachSSX20a} then improved the ratio to $0.7299$.
For the edge-weighted case, competitive ratios of $0.667$ and $0.705$ were given by \citet{conf/wine/HaeuplerMZ11} and \citet{journals/algorithmica/BrubachSSX20a}.
The problem with general arrival rates has also been studied, e.g., for the vertex-weighted case~\citep{conf/stoc/HuangS21,conf/stoc/TangWW22} and edge-weighted case~\citep{conf/soda/Yan24,conf/wine/QiuFZW23}. 
The state-of-the-art competitive ratios are $0.716$ by \citet{conf/stoc/0002SY22} for the vertex-weighted case and $0.662$ by \citet{Yan25} for the edge-weighted case.

\paragraph{Online Matching with Edge Arrivals.}
In the most general model of online matching, the edges of an underlying graph arrive online following an adversarial order.
\citet{conf/focs/GamlathKMSW19} proved that, without additional information, no algorithm can be better than $0.5$-competitive, even on bipartite graphs.
Strictly better competitive ratios can be achieved for relaxed settings, e.g., for batched arrivals~\citep{conf/ipco/LeeS17}, for forests~\citep{journals/algorithmica/BuchbinderST19,journals/corr/abs-2410-21727}, and for known stochastic edge realizations~\citep{conf/icalp/GravinTW21}\footnote{Although their model has a similar name to ours, we remark that these two models are quite different: in their model the edge arrival order is adversarial, but the realization probability of the edges are known.}.
\citet{conf/ipco/GuruganeshS17} studied online matching with random edge arrivals, and proposed $(0.5+\Omega(1))$-competitive algorithms, which are the first to beat half.
The problem on forests was studied by \citep{journals/rsa/DyerF91}, who showed a tight competitive ratio of $0.7690$ for \textsc{Greedy}.
For the edge-weighted setting, \citet{conf/sigecom/EzraFGT22} proposed an $1/4$-competitive algorithm, while an upper bound of $1/e$ is implied by \citep{jour/dynkin1963}.

\paragraph{Online Contention Resolution Schemes (OCRS) for Edge Arrivals.}
OCRS was introduced by \citet{conf/soda/FeldmanSZ16} as a general rounding technique for online problems where items $e$ arrive online with activation probabilities $x_e$.
A $\gamma$-selectable OCRS is a rounding scheme that selects each item $e$ with probability at least $\gamma\cdot x_e$ satisfying downward-closed feasibility constraints.
% Here, we only review the OCRS for the matching polytopes where items are edges.
\citet{journals/mor/EzraFGT22} designed a $0.337$-selectable OCRS for edge arrivals on general graphs.
The ratio was then improved to $0.344$ (on general graphs) and $0.346$ (on bipartite graphs) by \citet{journals/ior/MacRuryMG25}, and to $0.382$ by \citet{conf/sigecom/MaMN24} for general graphs with vanishing probabilities.
For random edge arrivals, \citet{conf/nips/BrubachGMS21} designed a $0.432$-selectable Random-order OCRS on general graphs, which was improved to $0.45$ (on general graphs) and $0.456$ (on bipartite graphs) by \citet{journals/mor/PollnerRSW24}.
The state-of-the-art ratio for Random-order OCRS on general graphs is $0.474$ while that on bipartite graphs is $0.476$, given by \citet{journals/ior/MacRuryMG25}.

\section{Preliminaries}
\label{section:preliminiaries}

We consider the online matching problem with known independent and identically distributed (KIID) edge arrivals.
The input is a general (not necessarily bipartite) type-graph $G(V,E)$, where each $e\in E$ represents a unique edge type.
All vertices of the graph are given offline, while the edges arrive online and sample their types from the type-graph independently.
For each edge type $e$, we use $m_e$ to denote its expected number of appearances in the realized graph, which is an integer.
The total number of online edges is $\sum_{e\in E} m_e = m$, where each online edge samples its type $e$ from $E$ with probability $m_e/m$.
Note that by replacing each $e\in E$ with $m_e$ parallel edges between the two endpoints, we can assume w.l.o.g. that $m_e = 1$ for all edges $e$.
% and the total number of online edges equals $m$.
Note that the type-graph could now be a multi-graph.\footnote{While there could be multiple edges between two vertices, we still consider them as different edges. To avoid ambiguity, throughout this paper, we will use $e$ (instead of $(u,v)$) to denote an edge (between $u$ and $v$), unless it is clear that the graph is simple.}
Each online edge samples its type uniformly at random from $E$ upon its arrival.
The online algorithm must then immediately decide whether to match this edge, subject to the constraint that both of its endpoints are unmatched.
The goal is to maximize the size of the resulting matching in the realized graph.
Note that each edge type $e\in E$ may realize more than once.
For convenience, we denote the edge that arrives at time step $k$ as $e_k$.

Throughout this paper, we use $\ALG$ to denote the number of edges matched by the online algorithm, and $\OPT$ to denote that of the maximum matching on the realized graph.
Note that both $\ALG$ and $\OPT$ are random variables that depend on the arrival of edges.
% If the online algorithm is randomized, then $\ALG$ is further dependent on the random decisions of the algorithm.
The competitive ratio of the online algorithm is defined as the infimum of ${\E[\ALG]}/{\E[\OPT]}$ over all possible type-graphs $G$.
Let $M^*$ be a maximum matching on the type-graph $G$.
Note that if there are multiple parallel edges between vertices $u$ and $v$, at most one of them could be included in $M^*$.
Throughout this paper, we use $n = |M^*|$ to denote the size of the maximum matching, which implies that $|V| \geq 2n$.
When the graph admits a perfect matching, we have $|V| = 2n$.
We assume that $n$ and $m$ are sufficiently large.
Note that $n$ naturally serves as an upper bound of $\E[\OPT]$, since any realized graph (after deletion of repeated edges) is a subgraph of $G$.
In this paper, we further consider a tighter upper bound of $\E[\OPT]$ by introducing an edge arrival version of the Natural LP.

For all $u\in V$, let $E(u)$ be the set of edges incident to $u$.

\paragraph{Natural LP}
The Natural LP was first formally proposed by \citet{conf/stoc/HuangS21} for online stochastic matching with vertex arrivals.
We adapt it here for the KIID edge arrival model as follows.
\begin{align}
    \text{Maximize} \quad & \sum_{e\in E} x_e, \notag \\
    \text{subject to} \quad & \sum_{e\in S} x_{e} \leq 1-e^{-|S|}, & \forall u\in V, \ \forall S\subseteq E(u), \label{constraint} \\
    & 0\leq x_e \leq 1, \quad & \forall e \in E. \notag
\end{align}

Note that although the Natural LP contains an exponential number of constraints, it can be solved in polynomial time using the separation oracle, see, e.g., \citep{conf/stoc/HuangS21}.

Throughout this paper, we use $\LP^*$ to denote the optimal objective of the Natural LP.
The following lemma shows that $\LP^*$ is an upper bound of $\E[\OPT]$, which is a common observation for online stochastic matching.
For completeness, we provide a proof in Appendix~\ref{appendix}.

\begin{lemma} \label{lemma:LP*>=E[OPT]}
    For online matching with KIID edge arrivals, we have $\LP^* \geq \E[\OPT]$.
\end{lemma}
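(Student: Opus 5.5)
The plan is to exhibit a feasible solution of the Natural LP whose objective equals $\E[\OPT]$. For each realization of the $m$ online edges, fix a maximum matching $\OPT$ of the realized graph, using a deterministic tie-breaking rule. Since parallel copies of the same type are interchangeable, we may assume that for each type $e \in E$ at most one realized copy of $e$ is used. Define
\[
x_e = \Pr[\text{some realized copy of type } e \text{ is in } \OPT], \qquad e\in E.
\]
By linearity of expectation, $\sum_{e\in E} x_e = \E[\OPT]$. Once feasibility is shown, this gives $\LP^* \ge \E[\OPT]$.

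The trivial constraints are immediate, since each $x_e$ is a probability and so $0 \le x_e \le 1$. For the main constraint, fix $u \in V$ and $S \subseteq E(u)$. In every realization, $\OPT$ is a matching, so it contains at most one edge incident to $u$. Hence the events $\{\text{type } e \text{ used by } \OPT\}$ for $e \in S$ are pairwise disjoint. Each such event also requires that type $e$ appears at least once among the $m$ arrivals. Therefore
\[
\sum_{e\in S} x_e = \Pr\Big[\bigcup_{e\in S}\{e \text{ used}\}\Big] \le \Pr[\text{some edge of } S \text{ arrives}].
\]

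The remaining step, and the only computational one, is to bound this arrival probability by $1-e^{-|S|}$. Recall that after the reduction each type is sampled with probability $1/m$ in each round, independently across the $m$ rounds. So no type in $S$ ever arrives with probability $(1-|S|/m)^m$. Using $1-t \le e^{-t}$ gives $(1-|S|/m)^m \le e^{-|S|}$, so the probability that some edge of $S$ arrives is $1-(1-|S|/m)^m \ge 1-e^{-|S|}$. This inequality points the wrong way, so the main obstacle is reconciling the finite-$m$ binomial arrival model with the Poisson form of the LP constraint.

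I would try two fixes. The first is to use the paper's standing assumption that $m$ is sufficiently large. Then $(1-|S|/m)^m = e^{-|S|}(1-O(|S|^2/m))$, and the slack is negligible relative to $\E[\OPT]$. One can scale $\bx$ down by a factor $1-o(1)$, or equivalently argue in the limit $m\to\infty$, so that the scaled solution is feasible with objective $(1-o(1))\E[\OPT]$. The second, cleaner fix is a Poissonization argument. Replace the fixed $m$ arrivals with $\mathrm{Poisson}(1)$ independent arrivals per type. Under this model the probability that some edge of $S$ arrives is exactly $1-e^{-|S|}$, and $\E[\OPT]$ changes by only a vanishing amount as $m$ grows. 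I would present the first route, noting that the error term is absorbed into the asymptotic assumption on $m$, as is standard in this literature.
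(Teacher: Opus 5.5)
This is the paper's argument. The paper likewise sets $x_e$ to the probability that type $e$ lies in the realized maximum matching and bounds $\sum_{e\in S}x_e$ by the probability that some edge of $S$ arrives, but then simply writes $1-(1-|S|/m)^m = 1-e^{-|S|}$ ``since $m$ is sufficiently large,'' so you are right that this step points the wrong way. The exact inequality genuinely fails: for $G=M^*$ with $m=n$ one has $\E[\OPT]=n(1-(1-1/n)^n)>n(1-1/e)=\LP^*$. Your rescaling therefore gives the correct statement $\LP^*\ge(1-O(1/m))\,\E[\OPT]$, which is all the paper actually uses.

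To make the rescaling rigorous, use the uniform additive bound $0\le e^{-s}-(1-s/m)^m=O(1/m)$ for all integers $1\le s\le m$, together with $1-e^{-s}\ge 1-1/e$. The multiplicative error $O(|S|^2/m)$ you wrote does not suffice, because $|S|$ can be as large as $|E(u)|$ and then that error is not uniformly small.
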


Next, we introduce two simple algorithms: \textsc{Greedy} and \textsc{Suggested Matching}, and provide a (tight) analysis on their competitive ratios under the KIID edge arrival model.

\subsection{Greedy}  \label{section:Greedy}

We first analyze the performance of the \textsc{Greedy} algorithm (see the following algorithm).
Observe that \textsc{Greedy} always outputs a maximal matching on the realized graph and thus has a competitive ratio of at least $0.5$.
In the following, we show that there exists an instance for which the competitive ratio of \textsc{Greedy} is upper bounded by $0.5 + o(1)$, even on bipartite graphs.

\begin{tcolorbox}[title=Greedy, label=algorithm:greedy]
    \textbf{For edge $e_k$ arriving in round $k \in \{ 1,2,\ldots,m \}$}:
    \begin{itemize}
        \item[]
        Select $e_k$ if both endpoints of $e_k$ are unmatched.        
    \end{itemize}
\end{tcolorbox}

\paragraph{Hard Instance.}
Consider the following bipartite graph with $n$ nodes on each side, where $n$ is sufficiently large such that $t := \sqrt{n}$ is an even integer.
Let there be $2n/t = 2t$ groups of vertices $A_1,A_2,\ldots,A_t,B_1,B_2,\ldots,B_t$, where each group has size $t$.
For each $i\in \{1,2,\ldots,t\}$, let $A_i$ and $B_i$ form a complete bipartite graph (so that there are $t^2 =n$ edges between them).
Let $\bigcup_{i\leq t/2} A_i$ and $\bigcup_{i> t/2} A_i$ form a complete bipartite graph (so that there are $(n/2)^2$ edges between them).
In total, the instance has $2n$ vertices and $m = (n/2)^2 + n^{1.5}$ edges.

In the following, we show that \textsc{Greedy} matches $(0.5+o(1))\cdot n$ vertices in expectation, while $\E[\OPT] \geq (1-o(1))\cdot n$, yielding an upper bound of $0.5+o(1)$ on the competitive ratio of \textsc{Greedy}.

\begin{theorem}  \label{thm:greedy_ub}
    \textsc{Greedy} is at most $0.5$-competitive for online matching with KIID edge arrivals.
\end{theorem}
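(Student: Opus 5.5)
We analyze the hard instance above and prove two bounds: $\E[\OPT] \geq (1-o(1))\cdot n$, and the expected size of the matching returned by \textsc{Greedy} is at most $(0.5+o(1))\cdot n$. Write $A=\bigcup_i A_i$ and $B=\bigcup_i B_i$. Call the $(n/2)^2$ edges between $\bigcup_{i\le t/2}A_i$ and $\bigcup_{i>t/2}A_i$ the \emph{dense} edges, and the $n^{1.5}$ edges inside the blocks $A_i\times B_i$ the \emph{block} edges. A uniformly random arrival is a block edge with probability $n^{1.5}/m = \Theta(n^{-1/2})$. Since $m$ edges arrive in total, about $n^{1.5}$ block edges are realized, with strong concentration.

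\textbf{Upper bound on \textsc{Greedy}.} Every matched edge covers a vertex of $A$, and each vertex of $A$ is covered at most once. Hence \textsc{Greedy} matches at most $n$ edges, and any dense edge it selects uses two $A$-vertices. So if $D$ dense edges are selected, the matching has size at most $D + (n-2D) = n - D$. It therefore suffices to show $\E[D]\geq (1/2-o(1))\,n$, i.e.\ that \textsc{Greedy} saturates almost all of $A$ using dense edges.

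Consider the first $T = n^{1.25}$ arrivals, a window with $n \ll T \ll m$. In this window the expected number of block-edge arrivals is $T\cdot\Theta(n^{-1/2}) = \Theta(n^{0.75}) = o(n)$, so block edges cover only $o(n)$ vertices of $A$ during the window. The dense arrivals form a random process on the complete bipartite graph $K_{n/2,n/2}$. Let $\ell$ be the number of unmatched vertices on each side. While $\ell$ is large, a uniformly random dense edge has both endpoints free with probability about $(2\ell/n)^2$, where we subtract the $o(n)$ vertices blocked by block edges. The expected number of arrivals needed to drive $\ell$ down to $\epsilon n$ is then about $\sum_{\ell\ge \epsilon n}(n/2\ell)^2 = O(n/\epsilon)$, which is $\ll T$ for any fixed $\epsilon$. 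Concentration follows from a martingale or Chernoff argument on geometric waiting times. Consequently, after round $T$ (and hence at the end), $D \geq n/2 - \epsilon n - o(n)$ with high probability. Letting $\epsilon\to 0$ gives $\E[\ALG]\le (1/2+o(1))\,n$.

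\textbf{Lower bound on $\OPT$.} The realized block edges alone suffice. Each block $A_i\times B_i$ is a copy of $K_{t,t}$ with $t^2=n$ edge types, and the number of arrivals in block $i$ is concentrated around $m\cdot n/m = n = t^2$. This is the regime of a random bipartite graph on $t+t$ vertices with about $t^2(1-1/e)$ distinct edges, i.e.\ constant edge density $1-1/e$. Such a graph has a perfect matching with probability $1-o(1)$ as $t\to\infty$ (the standard Erd\H{o}s--R\'enyi result, or Hall's condition via a union bound). So each block contributes $t-o(t)$ in expectation, and $\E[\OPT]\ge t\cdot(t-o(t)) = (1-o(1))\,n$.

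The main obstacle is making the \textsc{Greedy} argument rigorous. We must control the interaction between dense and block arrivals and justify the coupon-collector-type saturation of $K_{n/2,n/2}$ with concentration. A simpler route is to note that the number of free $A$-vertices after $T$ steps is dominated by the dense-only greedy process plus the $o(n)$ vertices consumed by block edges, and then to bound the expected free count via the recursion $\E[\ell_{k+1}]\le \E[\ell_k] - c(\E[\ell_k]/n)^2$ using Jensen's inequality. This yields $\E[\ell_T] = O(n^2/T) = o(n)$ directly, without needing a high-probability statement.
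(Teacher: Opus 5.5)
Your proof is correct and has the paper's overall structure (same instance; show $\E[\ALG]\le(1/2+o(1))n$ and $\E[\OPT]\ge(1-o(1))n$), but both halves are argued differently.

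For \textsc{Greedy}, the paper runs a virtual \textsc{Greedy} that rejects all block edges for $n^{1.2}$ rounds. It tracks only expectations via the Jensen recursion $\E[g(k+1)]\ge\E[g(k)]+(n/2-\E[g(k)])^2/m$; your fallback in the last paragraph is exactly this. It then asserts that the real run differs by $o(n)$, a step that tacitly needs a stability property of \textsc{Greedy} under mid-stream vertex deletions. Your waiting-time route, combined with your bookkeeping $|\ALG|\le n-D$, avoids that comparison if you track $\ell=\min(\ell_L,\ell_R)$, the smaller number of free vertices in the two halves of $A$ in the real run. Every accepted dense edge lowers $\ell$ by exactly one, and block edges never raise it. While $\ell=j$, each arrival lowers it with probability at least $\ell_L\ell_R/m\ge j^2/m$. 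Hence the expected time to reach $\ell\le\epsilon n$ is at most $\sum_{j>\epsilon n}m/j^2=O(n/\epsilon)\ll T$, and Markov's inequality suffices. On that event, $D\ge n/2-\epsilon n-b$, where $b$ is the number of accepted block edges and $\E[b]=O(n^{0.75})$.

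For $\OPT$, the paper reuses the same recursion restricted to one block, $\E[g(k)]\ge t(1-\frac{m}{kt+m})$. This is elementary and self-contained, and gives $t-O(1)$ per block. You instead import the perfect-matching property of constant-density random bipartite graphs. That gives a stronger high-probability statement, but needs one remark, because edge types appear via multinomial sampling rather than independently. The Hall union bound still goes through since $\Pr[\text{no type in } S \text{ arrives}]=(1-|S|/m)^m\le e^{-|S|}$, which is exactly the independent bound at density $1-1/e$. Alternatively, condition on the number of distinct types realized in the block and use the uniform model $G(t,t,M)$.
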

\begin{proof}
    Let $A = \bigcup_{i\leq t} A_i$ and $B = \bigcup_{i\leq t} B_i$.
    Note that $|A| = |B| = n$.\footnote{Note that $A$ and $B$ are not the two sides of a bipartite graph.}
    
    We first show that \textsc{Greedy} matches $(0.5+o(1))\cdot n$ vertices in expectation.
    
    Consider running \textsc{Greedy} for $m' = n^{1.2}$ rounds.
    Note that in expectation, among the $m'$ randomly sampled edges, $\frac{(n/2)^2}{m} \cdot m' = \Theta(n^{1.2})$ of them are between two vertices in $A$; $\frac{n^{1.5}}{m} \cdot m' = \Theta(n^{0.7})$ of them are between a vertex in $A$ and a vertex in $B$, where an edge might be sampled multiple times.
    Intuitively speaking, since only $o(n)$ edges are sampled between $A$ and $B$, their contribution to Greedy is negligible even if all of them get matched.
    It suffices to show that after running \textsc{Greedy} for $m'$ rounds, almost all vertices of $A$ are matched.
    
    We first consider a virtual version of \textsc{Greedy} that rejects all edges between $A$ and $B$, and analyze the number of vertices in $A$ matched in the first $m'$ rounds.
    Note that since in expectation, $o(n)$ edges are sampled between $A$ and $B$ in the first $m'$ rounds, compared with the virtual \textsc{Greedy}, the expected number of vertices in $A$ matched by the (actual) \textsc{Greedy} in the first $m'$ rounds only differs by $o(n)$.
    Let $g(k)$ be the number of edges matched by the virtual \textsc{Greedy} algorithm at the end of round $k$, where we define $g(0) = 0$.
    Note that $g(k)$ is a random variable depending on the random samples of edges.
    Conditioned on $g(k)$, we have
    \begin{equation*}
        \E[g(k+1) \mid g(k)] = g(k) + \frac{(n/2 - g(k))^2}{m},
    \end{equation*}
    since the virtual \textsc{Greedy} matches an edge if and only if it is between two unmatched vertices in $A$.
    Taking an expectation over $g(k)$, we have
    \begin{equation*}
        \E[g(k+1)] \geq \E[g(k)] + \frac{(n/2 - \E[g(k)])^2}{m},
    \end{equation*}
    where the inequality follows from Jensen's inequality and the convexity of function $f(x) = (1-x)^2$.

    Next we prove that $\E[g(k)] \geq \frac{n}{2}\cdot (1-\frac{2m}{kn+2m})$ by mathematical induction on $k$. The base case of $k=0$ holds trivially due to $g(0) = 0$.
    Suppose the statement holds for $k$, we have
    \begin{align*}
        \E[g(k+1)] & \geq \E[g(k)] + \frac{(n/2 - \E[g(k)])^2}{m}
        \geq \frac{n}{2}\cdot \left( 1-\frac{2m}{kn+2m} \right) + \frac{(n/2 \cdot \frac{2m}{kn+2m})^2}{m} \\
        & = \frac{n}{2}\cdot \left( 1 - \frac{2m}{kn+2m} + \frac{n\cdot 2m}{(kn+2m)^2} \right)
        \geq \frac{n}{2}\cdot \left( 1-\frac{2m}{(k+1)n+2m} \right),
    \end{align*}
    where the second inequality holds since the function $\E[g(k)] + \frac{(n/2 - \E[g(k)])^2}{m}$ is increasing in $\E[g(k)]$; the last inequality holds due to
    \begin{equation*}
        \frac{1}{kn+2m} - \frac{1}{(k+1)n+2m} = \frac{n}{(kn+2m)((k+1)n+2m)} \leq \frac{n}{(kn+2m)^2}.
    \end{equation*}

    Therefore we have
    \begin{equation*}
        \E[g(m')] \geq \frac{n}{2}\cdot \left( 1-\frac{2m}{m'\cdot n+2m} \right) = \frac{n}{2} - O(n^{0.8}),
    \end{equation*}
    which implies that almost all vertices in $A$ are matched by the virtual \textsc{Greedy} algorithm.
    Therefore, the (actual) \textsc{Greedy} matches $n - o(n)$ vertices in $A$ in the first $m'$ rounds.
    Since only $o(n)$ vertices in $B$ are matched and $o(n)$ vertices in $A$ are unmatched, the total number of edges matched by \textsc{Greedy} (after $m$ rounds) is upper bounded by $(0.5+o(1))\cdot n$.

    Next we show that $\E[\OPT] \geq (1-o(1))\cdot n$.
    Recall that $\OPT$ is the size of the maximum matching of the realized graph.
    We can lower bound $\OPT$ by constructing a matching (of large expected size) for the realized graph. 
    Consider running a \textsc{Greedy} algorithm that accepts only edges between $A$ and $B$.
    We show that the algorithm matches almost all vertices.

    Fix an arbitrary $i\in \{1,2,\ldots,t\}$. Recall that $A_i$ and $B_i$ form a complete bipartite graph, where $|A_i| = |B_i| = t = \sqrt{n}$.
    Moreover, there is no edge between $A_i$ and $B_j$, or $A_j$ and $B_i$, for all $j\neq i$.
    Similar to the above analysis, we let $g(k)$ denote the number of edges matched by \textsc{Greedy} between $A_i$ and $B_i$ in the first $k$ rounds, and show that
    \begin{equation*}
        \E[g(k+1)] \geq \E[g(k)] + \frac{(t - \E[g(k)])^2}{m}.
    \end{equation*}

    Again, we can show by mathematical induction that $\E[g(k)] \geq t\cdot (1-\frac{m}{k t+m})$. Therefore we have $\E[g(m)] \geq t - O(1)$, which implies that after $m$ rounds, \textsc{Greedy} matches $n-o(n)$ edges and we have $\E[\OPT] \geq (1-o(1))\cdot n$.
\end{proof}

\subsection{Suggested Matching} \label{section:suggested-matching}

In this section, we show that, similar to the vertex arrival setting on bipartite graphs~\citep{conf/focs/FeldmanMMM09}, a competitive ratio of $1-1/e$ is achievable under the guidance of an offline solution on the type-graph $G$.
Recall that $M^*$ denotes a maximum matching on $G$.
Restating the \textsc{Suggested Matching} algorithm in the KIID edge arrival model (and feeding it with the indicator solution defined by $M^*$), we present the following algorithm.
Note that an edge $e\in M^*$ may be sampled multiple times but can be accepted only once.

\begin{tcolorbox}[title=Suggested Matching, label=algorithm:suggested-matching]
    \textbf{For edge $e_k$ arriving in round $k \in \{ 1,2,\ldots,m \}$}:
    \begin{itemize}
        \item[]
        Select $e_k$ if $e_k \in M^*$ and both endpoints of $e_k$ are unmatched. 
    \end{itemize}
\end{tcolorbox}

\begin{theorem}
    Feeding with the indicator solution defined by $M^*$, \textsc{Suggested Matching} achieves a competitive ratio of $1-1/e+o(1)$ for online matching with KIID edge arrivals, and the ratio is tight\footnote{We will omit some $o(1)$ items for simplicity of presentation in subsequent sections, which does not affect the correctness of our results as $m$ and $n$ are assumed to be sufficiently large.}.
\end{theorem}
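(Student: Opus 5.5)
Two things need proving: a lower bound of $(1-1/e-o(1))\cdot\E[\OPT]$ for \textsc{Suggested Matching}, and an instance on which this bound is attained. For the lower bound I will compare against the trivial upper bound $\E[\OPT]\le n = |M^*|$, noted in the preliminaries: every realized graph, after removing repeated edges, is a subgraph of $G$. Then $\E[\ALG] \ge (1-1/e-o(1))\cdot n$ suffices. The key structural fact is that the algorithm only ever selects edges of $M^*$. These edges are pairwise vertex-disjoint, so there is no contention between them. Each $e\in M^*$ is therefore selected if and only if at least one of the $m$ online edges has type $e$.

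The lower bound is then a single computation. After the reduction to $m_e=1$, each round samples $e$ with probability $1/m$, independently across rounds. So
\begin{equation*}
\Pr[e \text{ selected}] = 1-\left(1-\tfrac{1}{m}\right)^m \ge 1-\tfrac1e .
\end{equation*}
Summing over the $n$ edges of $M^*$ by linearity of expectation gives $\E[\ALG] = n\bigl(1-(1-1/m)^m\bigr)\ge (1-1/e)\, n \ge (1-1/e)\,\E[\OPT]$.

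For tightness I will use the instance from the introduction, the complete graph on $2n$ vertices. For bookkeeping I can instead take a dense bipartite graph, e.g.\ the complete bipartite graph $K_{n,n}$ with $m=n^2$. The algorithm still gets $n(1-(1-1/m)^m)=(1-1/e+o(1))\,n$. It then remains to show $\E[\OPT]\ge (1-o(1))\,n$. To do this I will reuse the argument from the proof of Theorem~\ref{thm:greedy_ub}. Run a virtual \textsc{Greedy} on the realized graph and let $g(k)$ be its matching size after $k$ rounds. Jensen's inequality gives
\begin{equation*}
\E[g(k+1)]\ge \E[g(k)]+\frac{(n-\E[g(k)])^2}{m},
\end{equation*}
and induction, as in that proof, yields $\E[g(m)]\ge n\bigl(1-\tfrac{m}{mn+m}\bigr)=n-O(1)$. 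Since $\OPT\ge g(m)$, the ratio on this instance is at most $1-1/e+o(1)$.

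I do not expect a serious obstacle. The two points needing care are the following. First, parallel edges are handled correctly: at most one copy between any pair of vertices lies in $M^*$, and repeated samples of an $M^*$ edge are simply rejected, so neither breaks the "at least one arrival" characterization. Second, the tight instance must be dense enough that the greedy recurrence drives $\E[\OPT]$ to $n-o(n)$. This is the step I would check most carefully, making sure the Jensen step (convexity of $(n-x)^2$) and the monotonicity used in the induction carry over exactly as in Theorem~\ref{thm:greedy_ub}.
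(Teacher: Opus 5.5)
Your proposal is correct and follows essentially the same route as the paper. You lower bound $\E[\ALG]$ by $n\bigl(1-(1-1/m)^m\bigr)$ because edges of $M^*$ face no contention, and compare this against $\E[\OPT]\le n$. For tightness you use $K_{n,n}$ with $m=n^2$ and get $\E[\OPT]\ge n-O(1)$ from the Greedy recurrence in the proof of Theorem~\ref{thm:greedy_ub}, specialized to $t=n$.
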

\begin{proof}
    Observe that each edge $e\in M^*$ will be matched by \textsc{Suggested Matching} if and only if it ever arrives, because no edges that share a common endpoint with $e$ will be selected.
    Therefore, we can lower bound $\E[\ALG]$ as follows.
    \begin{align*}
        \E[\ALG] & = \sum_{e\in M^*} \Pr[e \text{ arrives at least once}] = \sum_{e\in M^*} \left( 1-\left( 1-\frac{1}{m} \right)^m \right) \\
        & = (1-1/e+o(1))\cdot n
        \geq (1-1/e+o(1))\cdot \E[\OPT].
    \end{align*}
    
    We remark that the above analysis is tight.
    Consider the type-graph being a complete bipartite graph $G(U\cup V,E)$ with $|U|=|V|=n$ and $|E|=n^2$ where $n$ is sufficiently large.
    Notice that this is equivalent to the case of $t = n$ in the above analysis (for \textsc{Greedy}).
    Therefore we can lower bound $\OPT$ by the performance of \textsc{Greedy} and derive that $\E[\OPT]=(1-o(1))\cdot n$.
    Recall that we have $\E[\ALG] = (1-1/e+o(1))\cdot n$.
    Therefore, we finish the proof.
\end{proof}

The above tight competitive analysis indicates the impossibility of beating $1-1/e$ (in general) using \textsc{Suggested Matching}, as there exist type-graphs on which $\E[\OPT]\approx n$.
However, for instances that admit a gap between $\E[\OPT]$ and $n$, \textsc{Suggested Matching} has a competitive ratio strictly larger than $1-1/e$.
This allows us to only consider instances with $\LP^*\geq (1-\epsilon)\cdot n$.

\begin{observation}
    For any $\epsilon > 0$, if the type-graph $G$ satisfies $\LP^* < (1-\epsilon)\cdot n$, then \textsc{Suggested Matching} achieves a competitive ratio of $\frac{1-1/e}{1-\epsilon} > 1-1/e+\epsilon/2$.
\end{observation}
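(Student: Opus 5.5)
The plan is to combine the per-instance guarantee of \textsc{Suggested Matching} from the previous theorem with the LP upper bound of Lemma~\ref{lemma:LP*>=E[OPT]}. The proof of the previous theorem shows, with no structural assumption on $G$, that
\[
\E[\ALG] = \sum_{e\in M^*}\Bigl(1-\bigl(1-\tfrac{1}{m}\bigr)^m\Bigr) \ge (1-1/e)\cdot n .
\]
The equality holds because each $e\in M^*$ is selected exactly when it arrives at least once. The inequality uses $(1-1/m)^m \le 1/e$ for every $m\ge 1$, so we do not even need the $o(1)$ term here.

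Next I bound $\E[\OPT]$. By Lemma~\ref{lemma:LP*>=E[OPT]} and the hypothesis,
\[
\E[\OPT] \le \LP^* < (1-\epsilon)\cdot n .
\]
Dividing the two bounds gives $\E[\ALG]/\E[\OPT] > \frac{1-1/e}{1-\epsilon}$ for every such instance. This establishes the claimed competitive ratio on the class of type-graphs with $\LP^* < (1-\epsilon)\cdot n$.

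For the final numerical inequality, write $c = 1-1/e \approx 0.632$. Since $\frac{1}{1-\epsilon} \ge 1+\epsilon$,
\[
\frac{c}{1-\epsilon} \ge c + c\epsilon > c + \epsilon/2 ,
\]
where the last step uses $c > 1/2$ and $\epsilon > 0$. This holds for all $\epsilon \in (0,1)$. If $\epsilon \ge 1$ the hypothesis $\LP^* < (1-\epsilon)n \le 0$ is impossible for nonempty $G$, so the statement is vacuous.

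No step is a real obstacle. The one point to check is that the exact per-edge bound $1-(1-1/m)^m \ge 1-1/e$ holds for finite $m$, so the strict inequality is not lost to an $o(1)$ term. If the bound only held asymptotically, the strictness would instead come from the slack $c\epsilon > \epsilon/2$ in the last display.
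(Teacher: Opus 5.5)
Your proof is correct and is the argument the paper intends. The paper states the observation without proof, directly after showing $\E[\ALG]=(1-1/e+o(1))\,n$ and noting $\E[\OPT]\le\LP^*$; your elementary check via $\frac{1}{1-\epsilon}\ge 1+\epsilon$ and $1-1/e>1/2$ completes it. One caveat: the $o(1)$ you avoid on the $\ALG$ side reappears in the lemma $\LP^*\ge\E[\OPT]$, because $1-(1-|S|/m)^m\ge 1-e^{-|S|}$ means the realized-optimum solution satisfies the LP constraints only asymptotically. So it is your closing fallback, where the slack $c\epsilon>\epsilon/2$ absorbs that $o(1)$, that actually secures the strict bound $>1-1/e+\epsilon/2$.
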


\section{Boosted Suggested Matching} \label{section:boosted-suggested-matching}

In this section, we propose the \textsc{Boosted Suggested Matching} algorithm and lower bound its competitive ratio.
Before its matching decisions, our algorithm first computes the optimal objective of the Natural LP.
The choice of the parameter $\rho\in [0,1]$ is then adaptively determined by comparing $\LP^*$ against $n$.
If $\LP^* < (1-\epsilon)\cdot n$ where $\epsilon>0$ is a small constant to be optimized later, we set $\rho=1$ for completeness, under which the algorithm degenerates to \textsc{Suggested Matching}.
Otherwise, $\rho$ is set to the optimized constant derived in our analysis.
With $\rho$ properly initialized, our algorithm proceeds as follows.

\begin{tcolorbox}[title=Boosted Suggested Matching, label=algorithm:boosted-suggested-matching]
\textbf{For edge $e_k$ arriving in round $k \in \{ 1,2,\ldots,m \}$:}
\begin{enumerate}
  \item If $k \leq \rho\cdot m$: select the edge if $e_k \in M^*$ and both endpoints are unmatched;
  \item If $k > \rho\cdot m$: select the edge if both endpoints of $e_k$ are unmatched.
\end{enumerate}
\end{tcolorbox}

\begin{theorem}
    There is a $0.6342$-competitive algorithm for online matching with KIID edge arrivals.
    Moreover, if the type-graph admits a perfect matching, then the ratio can be improved to $0.6383$.
\end{theorem}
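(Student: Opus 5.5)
The plan follows the outline in Section~\ref{section:our-techniques}. The case $\LP^* < (1-\epsilon)n$ is settled by the Observation: with $\rho=1$ the ratio is at least $\frac{1-1/e}{1-\epsilon}$. So assume $\LP^* \ge (1-\epsilon)n$. By Lemma~\ref{lemma:LP*>=E[OPT]} it then suffices to lower bound $\E[\ALG]/n$.

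\textbf{Phase one.} For $k\le \rho m$, an edge of $M^*$ whose endpoints are both unmatched is still unmatched only if it has not arrived yet. Hence the matching rate in round $k$ is $(1-1/m)^{k-1}\approx e^{-k/m}$, and phase one contributes $(1-e^{-\rho})n$. Moreover, at time $\rho m$ each vertex of $\hat V$ (the vertices covered by $M^*$) is unmatched with probability $e^{-\rho}$, independently across the pairs of $M^*$. Vertices outside $\hat V$ are always unmatched at this point.

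\textbf{Phase two via a coupling.} Couple the phase-two run of \textsc{Greedy} with an unconstrained \textsc{Greedy} $\mathcal G'$ run on the same $(1-\rho)m$ edge arrivals. I would use the following surrogate for the constrained process. Mark each $M^*$-pair as surviving independently with probability $e^{-\rho}$, and count only those edges of $\mathcal G'$ whose endpoints all survive. An edge inside $\hat V$ between two different $M^*$-pairs then survives with probability $e^{-2\rho}$; an edge between $\hat V$ and $V\setminus\hat V$ survives with probability $e^{-\rho}$.

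I would not try to prove this comparison pathwise. Instead I would lower bound the matching rate of the constrained process in each round directly, in terms of the number of free type-edges, and compare it with the corresponding quantity for $\mathcal G'$.

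\textbf{Lower bound on the rate of $\mathcal G'$.} Take an optimal LP solution $\bx$ with $\sum_e x_e\ge(1-\epsilon)n$. Constraint~\eqref{constraint} says that a vertex $u$ with $x_u=\sum_{e\in E(u)}x_e$ close to $1$ must have many incident edges with positive $x_e$. More quantitatively, if $d$ edges of $E(u)$ remain free at time $t$, those edges carry LP mass at most $1-e^{-d}$.

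Next I would argue that, in expectation, the LP mass on edges that are free at time $k$ is at least $\LP^*$ minus the mass blocked by vertices already matched, minus $\sum_u$ of the mass lost at vertices whose free degree has become small. This yields a differential inequality for the number of free edges, in the same spirit as the recursion for $g(k)$ in Theorem~\ref{thm:greedy_ub}. From it I expect a rate $r_k\ge c(\rho,k/m)$.

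When the type-graph has a perfect matching, $x_u\approx 1$ for almost all $u$. In general I would split the LP mass into mass on edges inside $\hat V$ and mass on edges crossing $\hat V$ and $V\setminus\hat V$, with discount factors $e^{-2\rho}$ and $e^{-\rho}$ respectively, and take the worst split $\lambda$ subject to $\sum_e x_e\ge (1-\epsilon)n$ and $|M^*|=n$.

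\textbf{Optimization and the main obstacle.} Finally I would write $\E[\ALG]/n \ge 1-e^{-\rho}+\int_\rho^1 c(\rho,\tau)\,\d\tau - O(\epsilon)$, minimize over $\lambda$, maximize over $\rho$, and then balance $\epsilon$ against the Observation's bound $\frac{1-1/e}{1-\epsilon}$. Numerically this should give $0.6342$ in general and $0.6383$ with a perfect matching.

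I expect the main obstacle to be the phase-two rate bound: turning Constraint~\eqref{constraint} into a rigorous lower bound on the number of free edges while \textsc{Greedy} keeps consuming vertices, with the correlations from phase one handled only in expectation.
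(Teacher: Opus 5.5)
There is a genuine gap, and it is exactly at the step you decline to prove: the inequality $m_2\ge{}$(number of edges of $\mathcal G'$ whose endpoints survive phase one). Your surrogate is the right one. However, a round-by-round comparison of free type-edges cannot establish it, because there is no per-round domination.

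Here is a concrete failure. Suppose $a$ was matched in phase one, and the first phase-two arrivals are $(a,b)$ and then $(b,c)$. The constrained process accepts $(b,c)$. $\mathcal G'$ accepts $(a,b)$, which is later discarded. A type-edge $(c,d)$ is then free for $\mathcal G'$ and survives the marking, but is blocked in the constrained process. The constrained process can be ahead in matched edges and therefore behind in free edges, so only cumulative sizes compare, not per-round rates.

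Bounding the constrained process's free edges directly does not work either. Let $g_c(k)$ count its phase-two matches. The argument of Lemma~\ref{lemma:lower-bound-of-free-edges}, applied to its unmatched set, lower bounds the LP mass on free edges by roughly $(1-\epsilon)n-2(1-e^{-\rho})n-2g_c(k)$. This is negative for every $\rho>\ln 2$. If you instead restrict $\bx$ to the survivors, each surviving vertex keeps only about an $e^{-\rho}$ fraction of its non-$M^*$ mass. Its free mass then stays bounded away from $1$, so Constraint~\eqref{constraint} forces only $O(1)$ free edges per vertex, independent of $\epsilon$. The $\ln(1/\epsilon)$ amplification that drives the constants is lost.

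The missing idea is the pathwise lemma behind Lemma~\ref{lemma:lower-bound-of-E[m2]}. For every arrival sequence $K$ and vertex set $X$, running \textsc{Greedy} with $X$ blocked from the start yields at least as many edges as running it unblocked and afterwards deleting every accepted edge touching $X$. The proof examines each alternating path of the symmetric difference. It shows that the earliest-arriving edge on the path is incident to an endpoint of the path and belongs to the blocked-start matching. Because the phase-two arrivals are independent of $V(M_1)$, this gives $\E[m_2]\ge e^{-2\rho}\E[\hat g((1-\rho)m)]+e^{-\rho}\E[\bar g((1-\rho)m)]$.

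Note that $M^*$ edges also survive with probability $e^{-\rho}$, which you omit. This is where the weight $e^{\rho}$ in $w=\hat g+e^{\rho}\bar g$ and the term $(e^{\rho}-1)(n-2g(k))$ come from. All remaining analysis is then on the unconstrained process, where the blocked set is only the $2g(k)$ vertices that \textsc{Greedy} itself matched.

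Your LP step also needs sharpening.
\begin{itemize}
\item Free LP mass is lower bounded by $(1-\epsilon)n-2g(k)$. It is upper bounded via $1-e^{-f_u}$ at each unmatched vertex together with Claim~\ref{claim:sum_of_1-e^d}. There is no separate ``mass lost at low-degree vertices'' term.
\item In the perfect-matching case, passing to expectations uses Jensen with Claim~\ref{claim:convexity_of_f}.
\item In the general case no convex bound is available. The paper instead uses $x_e\le 1-1/e$ on $\hat V$--$\bar V$ edges to bound $\Delta(k)$ only while $w(k)\le\epsilon' n$ (Lemma~\ref{lemma:small_w_large_delta}). It then applies the Chebyshev argument of Lemma~\ref{lemma:large_expected_g}, rather than a differential inequality.
\end{itemize}
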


The remainder of this section is devoted to proving the above theorem.

\subsection{Analysis Framework} \label{section:Analysis_framework}

We first describe our general analysis framework for lower bounding the competitive ratio of the algorithm.
Let $M_1$ and $M_2$ be the (random) sets of edges matched in the first phase (from round $1$ to round $\rho\cdot m$) and the second phase (from round $\rho\cdot m + 1$ to the end) of the algorithm, respectively.
Let $m_1=|M_1|$ and $m_2=|M_2|$.
By a similar argument as in Section~\ref{section:suggested-matching}, the expected number of matched edges in the first phase is
\begin{equation} \label{equation:E[m_1]}
    \E[m_1] = \sum_{e\in M^*} \left( 1- \left( 1-\frac{1}{m} \right)^{\rho\cdot m} \right)
    = \left( 1-e^{-\rho} \right)\cdot n.
\end{equation}

Next, we provide a lower bound for $\E[m_2]$.
Notice that the algorithm we use in the second phase is just \textsc{Greedy}, constrained by some of the vertices (those matched in $M_1$) have already been matched when the algorithm begins.
We show in Lemma~\ref{lemma:lower-bound-of-E[m2]} that $\E[m_2]$ can be lower bounded by a constant fraction of the number of edges matched by the ``unconstrained'' \textsc{Greedy} algorithm.
% However, some of the vertices (those matched in $M_1$) have already been matched when the algorithm begins.
% We show in Lemma~\ref{lemma:lower-bound-of-E[m2]} that the number of edges matched by this constrained version of \textsc{Greedy} can be lower bounded by a constant fraction of that by the unconstrained version.
% 
Recall that $|M^*|=n$.
We introduce some notations:
\begin{itemize}
    \item Let $\hat{V}$ be the set of the $2n$ vertices matched in $M^*$. Let $\bar{V} = V\setminus \hat{V}$ be the other vertices.
    \item Let $\hat{E}$ be the set of all edges between vertices in $\hat{V}$, and $\bar{E}=E\setminus \hat{E}$.
\end{itemize}

Notice that $M^*\subseteq \hat{E}$ and there is no edge between any two vertices in $\bar{V}$ (since otherwise, $M^*$ is not a maximum matching).

We analyze the number of edges matched by the (unconstrained) \textsc{Greedy} algorithm.
For any $k>0$, let $g(k)$ be the number of edges matched by \textsc{Greedy} after running the algorithm for $k$ rounds.
To give a more precise characterization, we further define
\begin{itemize}
    \item $\hat{g}(k)$ be the number of edges in $\hat{E}\setminus M^*$ matched by \textsc{Greedy};
    \item $\bar{g}(k)$ be the number of edges in $\bar{E}\cup M^*$ matched by \textsc{Greedy}.
\end{itemize}

Note that both $\hat{g}(k)$ and $\bar{g}(k)$ are random variables non-decreasing in $k$ and we have $g(k) = \hat{g}(k)+\bar{g}(k)$.
For convenience, we also define $\hat{g}(0)=0$ and $\bar{g}(0)=0$.
The following lemma states that we can lower bound the size of the constrained \textsc{Greedy} by a constant fraction of the size of the unconstrained \textsc{Greedy}.
Intuitively, for each edge $e$ that contribute to $g(k)$, if $e\in \hat{E}\setminus M^*$, then $e$ has a probability of $e^{-2\rho}$ for contributing to $M_2$, as each of its endpoints is unmatched in the first phase with probability $e^{-\rho}$, and the two events are independent; the probability becomes $e^{-\rho}$ if $e\in \bar{E}\cup M^*$, since either only one of its endpoints could be matched (if $e\in \bar{E}$), or the two matching events are positively correlated (if $e\in M^*$).
For continuity of presentation, we defer the proof of the lemma to Appendix~\ref{appendix}.

\begin{lemma}
\label{lemma:lower-bound-of-E[m2]}
    We have
    $\E[m_2] \geq e^{-2\rho}\cdot \E[\hat{g}((1-\rho)m)] + e^{-\rho}\cdot \E[\bar{g}((1-\rho)m)]$.
\end{lemma}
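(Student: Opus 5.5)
We couple the second phase with an independent run of unconstrained \textsc{Greedy}. Since the arrivals in rounds $\rho m+1,\dots,m$ are i.i.d.\ uniform on $E$ and independent of the first phase, we can identify them with the first $(1-\rho)m$ arrivals of an unconstrained \textsc{Greedy} run. Call the matching produced by this run $G'$. Then $G'$ is independent of $M_1$ and has the law of \textsc{Greedy} after $(1-\rho)m$ rounds. Let $U$ be the set of vertices left unmatched by $M_1$.

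The key structural step is a monotonicity claim. Constrained \textsc{Greedy} on the same arrival sequence, started with only the vertices in $U$ available, matches every edge $e\in G'$ whose two endpoints both lie in $U$ \emph{and are both still free when $e$ arrives}. This gives no pointwise domination, since constrained greedy can block $e$ through an earlier edge that unconstrained greedy skipped. So we aim for a charging inequality instead:
\[
|M_2| \;\ge\; \big|\{e\in G' : \text{both endpoints of } e \text{ in } U\}\big|.
\]
To prove it, we view $G'$ restricted to $U$ as a matching on $U$ and show that the greedy matching $M_2$ on $U$ is at least as large as the greedy-compatible submatching of $G'$. We would try an exchange argument. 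Each edge $e\in G'\cap E(U)$ that is not in $M_2$ was blocked by some earlier edge $f\in M_2$ at a shared endpoint. That $f$ was not taken by unconstrained greedy, so $f$ was blocked in $G'$ by an even earlier edge whose endpoint lies outside $U$ or which conflicts elsewhere. We then want to charge $e$ injectively to an edge of $M_2$.

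This charging step is the main obstacle. A naive charge could send two edges of $G'$ to the same $f$, one at each endpoint of $f$. If an injective charging fails, the fallback is to prove the inequality in expectation via the rate argument. Here the relevant comparison is that the number of free edges available to constrained greedy at each round dominates, in expectation, the number of $G'$-edges inside $U$ that are still to be matched.

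Once the inequality holds, we take expectations using independence. For $e\in\hat E\setminus M^*$, the endpoints $u,v$ are matched in $M^*$ to distinct partners, so the events $u\in U$ and $v\in U$ depend on disjoint edges of $M^*$. These events are essentially independent, each with probability $(1-1/m)^{\rho m}\approx e^{-\rho}$, so jointly they have probability about $e^{-2\rho}$. For $e\in\bar E$, one endpoint lies in $\bar V$ and is always in $U$, so the probability is $e^{-\rho}$. For $e\in M^*$, both endpoints are free exactly when $e$ never arrived in the first phase, which again has probability $e^{-\rho}$. Summing over edges, using linearity and the independence of $G'$ from $M_1$, gives
\[
\E[m_2]\ge e^{-2\rho}\E[\hat g((1-\rho)m)]+e^{-\rho}\E[\bar g((1-\rho)m)],
\]
up to $o(1)$ factors that the paper suppresses.
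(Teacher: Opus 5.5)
\textbf{There is a genuine gap at the central step.} Your outer structure matches the paper's. You identify the phase-2 arrivals with an independent unconstrained \textsc{Greedy} run $G'$, prove $|M_2|\ge|G'\cap E(U)|$ for every realization, and then use the independence of $G'$ and $M_1$ together with the per-edge survival probabilities. That inequality, however, is the entire content of the lemma, and you do not prove it. Your charging scheme is not injective, as you note yourself. The fallback ``rate argument'' is not a proof either. There is no evident domination between the free edges of the constrained run and the free edges of the unconstrained run inside $U$. The unconstrained run can destroy edges inside $U$ through edges leaving $U$. The constrained run can destroy them through edges inside $U$ that the unconstrained run skipped.

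\textbf{The missing idea} is a global exchange argument on $D = M_2 \,\triangle\, (G'\cap E(U))$, which is a union of alternating cycles and paths. Cycles are balanced, so it suffices to show that every path $P$ has at least as many $M_2$-edges as $G'$-edges. An edge accepted by either run is accepted at its first arrival, so arrival order is well defined across the two runs. Let $e$ be the earliest edge of $P$.

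Suppose both endpoints of $e$ were interior to $P$. Then each lies in $U$, and its unique edge in each run lies on $P$ and arrives no earlier than $e$. So $e$ would be accepted by both runs and lie in $M_2\cap G'\cap E(U)$, contradicting $e\in D$. Hence $e$ touches an end $u$ of $P$.

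Now suppose $e\in G'\cap E(U)$ but $e\notin M_2$. Then $u\in U$, and $u$ is never matched by the constrained run, since its only $D$-edge is $e$. The other endpoint of $e$ is also in $U$ and still free in the constrained run when $e$ arrives, because its $M_2$-edge, if any, lies later on $P$. So the constrained run would take $e$, a contradiction.

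Thus $P$ begins at $u$ with an $M_2$-edge, and $|P\cap M_2|\ge|P\cap G'|$. This is exactly the paper's comparison $|\mathcal{M}(K,X,\varnothing)|\ge|\mathcal{M}(K,\varnothing,X)|$ with $X=V(M_1)$.

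\textbf{A smaller slip.} An edge of $\hat E\setminus M^*$ can be a parallel copy of an $M^*$-edge, so its endpoints need not have distinct $M^*$-partners. Its survival probability is then $e^{-\rho}\ge e^{-2\rho}$. This only helps the bound, which is why the paper treats it as a separate case and states the result as an inequality.
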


Combining Equation~\eqref{equation:E[m_1]} and Lemma~\ref{lemma:lower-bound-of-E[m2]}, we can lower bound  the expected size of the matching output by \textsc{Boosted Suggested Matching} as follows:
\begin{align*}
    \E[\ALG] &= \E[m_1]+\E[m_2] \geq (1-e^{-\rho})\cdot n + e^{-2\rho}\cdot \E[\hat{g}((1-\rho)m)] + e^{-\rho}\cdot \E[\bar{g}((1-\rho)m) \\
    &= (1-e^{-\rho})\cdot n + e^{-2\rho}\cdot \left( \E\left[ \hat{g}((1-\rho)m) + e^{\rho}\cdot \bar{g}((1-\rho)m) \right] \right).
\end{align*}

Define $w(k) = \hat{g}(k) + e^{\rho}\cdot \bar{g}(k)$.
The main focus of our analysis is to provide a lower bound for $\E[w((1-\rho)m)]$.
We interpret $w(k)$ as the ``total weight'' of the matching produced by running \textsc{Greedy} for $k$ rounds, where edges from $\hat{E} \setminus M^*$ carry a weight of $1$ while those from $\bar{E}$ and $M^*$ have weight $e^\rho$.
To show a lower bound for $\E[w((1-\rho)\cdot m)]$, we argue that for all $k < (1-\rho)\cdot m$, there are many \emph{free edges} in the type-graph at the end of round $k$, where an edge is free if both its endpoints have not been matched by \textsc{Greedy}.
Intuitively speaking, the more free edges there are, the more likely the next online edge will be accepted by \textsc{Greedy}.

\begin{definition} [Free Edges]
    For each $k\geq 0$, an edge $e\in E$ is free at the end of round $k$ if both of its endpoints are not matched by \textsc{Greedy}.
    Let $F(k)$ be the (random) set of free edges at the end of round $k$.
    We further define $\hat{F}(k) = F(k)\cap (\hat{E}\setminus M^*)$, $\bar{F}(k) = F(k)\cap (\bar{E}\cup M^*)$.
\end{definition}

% Now we characterize $w(k)$ by the number of free edges as follows.

Fix the realization of edges in the first $k$ rounds, we have 
\begin{align*}
    w(k+1) = \begin{cases}
        w(k), \quad &\text{if } e_{k+1}\notin F(k); \\
        w(k) + 1, \quad &\text{if } e_{k+1}\in \hat{F}(k); \\
        w(k) + e^\rho, \quad &\text{if } e_{k+1}\in \bar{F}(k).
    \end{cases}
\end{align*}

Recall that each edge $e\in E$ realizes independently with probability $1/m$, we have
\begin{equation*}
    \E_{k+1}[w(k+1)] = w(k) + \frac{|\hat{F}(k)|}{m} + \frac{|\bar{F}(k)|}{m} \cdot e^\rho ,
\end{equation*}
where the expectation is taken only over the randomness from the $(k+1)$-th round.

Reorder the above equality and take another expectation (over the randomness of the first $k$ rounds) on both sides, we have
\begin{equation}  \label{equation:relation-between-greedy-and-free-edges}
    \E[w(k+1)] - \E[w(k)] = \frac{\E[|\hat{F}(k)| + e^{\rho}\cdot |\bar{F}(k)|]}{m}.
\end{equation}

So far, we have established the relation between the increment in $\E[w(k)]$ and the total weight of free edges at the end of round $k$.
Since $|\hat{F}(k)|$ and $|\bar{F}(k)|$ are non-increasing in $k$, the above equality also implies that the increment in $\E[w(k)]$ is non-increasing in $k$: the more edges \textsc{Greedy} matches, the less likely the next online edge can be accepted.

We will describe in Section~\ref{section:with-perfect-matching} and~\ref{section:without-perfect-matching} how to characterize the total weight of free edges using the Natural LP (recall that we have assumed $\LP^* \geq (1-\epsilon)\cdot n$), and lower bound $\E[w((1-\rho)m)]$.
We first present a few technical claims, which will be useful in the later analysis.

\begin{claim} \label{claim:sum_of_1-e^d}
    Given any non-negative integers $\{ d_1,d_2,\ldots,d_\ell \}$, we have
    \begin{equation*}
        \sum_{i=1}^\ell \left( 1 - e^{-d_i} \right) \leq \ell\cdot \left( 1 - e^{-\frac{1}{\ell}\sum_{i=1}^\ell d_i} \right).
    \end{equation*}
\end{claim}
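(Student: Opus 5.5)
This is Jensen's inequality for the concave function $f(x) = 1 - e^{-x}$ on $[0,\infty)$. I will extend $f$ to real arguments and observe that $f''(x) = -e^{-x} < 0$, so $f$ is concave on all of $\mathbb{R}$. Integrality and non-negativity of the $d_i$ play no role; they only reflect the intended use, where $d_i = |S|$ counts edges in Constraint~\eqref{constraint}.

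The first step is the finite form of Jensen's inequality with uniform weights $1/\ell$. For concave $f$,
\begin{equation*}
    \frac{1}{\ell}\sum_{i=1}^\ell f(d_i) \le f\left( \frac{1}{\ell}\sum_{i=1}^\ell d_i \right).
\end{equation*}
Multiplying both sides by $\ell$ gives exactly the claimed inequality $\sum_{i=1}^\ell (1-e^{-d_i}) \le \ell\,(1-e^{-\bar d})$, where $\bar d = \frac{1}{\ell}\sum_i d_i$.

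If I wanted to avoid citing Jensen, I would argue directly that $e^{-x}$ is convex, so it lies above its tangent line at $\bar d$:
\begin{equation*}
    e^{-d_i} \ge e^{-\bar d} - e^{-\bar d}(d_i - \bar d).
\end{equation*}
Summing over $i$, the linear terms cancel because $\sum_i (d_i - \bar d) = 0$. This gives $\sum_i e^{-d_i} \ge \ell e^{-\bar d}$, and subtracting both sides from $\ell$ yields the claim.

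There is no real obstacle here. The only cases needing a word are trivial: $\ell \ge 1$ is implicit for the average to make sense, and if all $d_i = 0$ both sides are zero, consistent with the inequality.
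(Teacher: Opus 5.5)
Your proof is correct and is essentially what the paper intends: the paper omits the proof as ``basic calculus,'' and the claim is exactly Jensen's inequality for the concave function $1-e^{-x}$, of which your tangent-line argument for the convex $e^{-x}$ is a clean self-contained version. You are also right that neither integrality nor non-negativity of the $d_i$ is needed.
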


\begin{claim} \label{claim:convexity_of_f}
    For any constants $c_1 > 0$ and $c_2 >0$, the following function is convex in $x\in (0,c_1)$:
    \begin{equation*}
        f(x) = (c_1 - x)\cdot \ln\left( \frac{c_1 - x}{c_2 + x} \right).
    \end{equation*}
\end{claim}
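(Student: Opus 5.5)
We prove convexity by computing the second derivative directly on $(0,c_1)$. Substitute $y = c_1 - x \in (0,c_1)$ and set $s = c_1 + c_2 > 0$, so that $c_2 + x = s - y$. Then
\begin{equation*}
    f = h(y) := y\ln y - y\ln(s-y).
\end{equation*}
Since $y$ is an affine function of $x$, we have $f''(x) = h''(y)$. It therefore suffices to show $h''(y) > 0$ for all $y\in(0,c_1)$. Note that $c_1 < s$, so $s - y > 0$ on this interval and the logarithm is well defined.

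We differentiate term by term. The first term gives $(y\ln y)'' = 1/y$. For the second term,
\begin{equation*}
    (y\ln(s-y))' = \ln(s-y) - \frac{y}{s-y}.
\end{equation*}
Using $\frac{y}{s-y} = \frac{s}{s-y} - 1$, we get
\begin{equation*}
    (y\ln(s-y))'' = -\frac{1}{s-y} - \frac{s}{(s-y)^2}.
\end{equation*}
Therefore
\begin{equation*}
    h''(y) = \frac{1}{y} + \frac{1}{s-y} + \frac{s}{(s-y)^2}.
\end{equation*}
Each of the three terms is strictly positive because $y>0$, $s-y>0$ and $s>0$. Hence $h''(y) > 0$ throughout the interval, and $f$ is (strictly) convex on $(0,c_1)$.

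The only step needing care is the sign bookkeeping in the second derivative of $-y\ln(s-y)$. Both contributions, $\frac{1}{s-y}$ and $\frac{s}{(s-y)^2}$, come out positive, which is consistent with the intuition that $-\ln(s-y)$ is convex and increasing on this range.
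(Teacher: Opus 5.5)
Your proof is correct and is the basic-calculus verification the paper has in mind (the paper omits the proof). In the original variables your computation reads $f''(x) = \frac{1}{c_1-x} + \frac{1}{c_2+x} + \frac{c_1+c_2}{(c_2+x)^2} > 0$ on $(0,c_1)$, and the affine substitution $y = c_1 - x$ is a harmless convenience.
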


We omit the proofs of the claims as they can be verified using basic calculus.

\subsection{Instances with Perfect Matching}
\label{section:with-perfect-matching}

In this section, we consider type-graphs $G$ that admit perfect matchings, and give a lower bound of $0.6383$ for the competitive ratio.
Note that with this assumption, we have $\bar{V} = \bar{E} = \varnothing$, which implies $\hat{E} = E$, and thus $\hat{F}(k) = F(k)\setminus M^*$ and $\bar{F}(k) = F(k)\cap M^*$.
% This enables us to concentrate on establishing a lower bound for $\E[w(k)]$.
% 
Recall that our main goal is to provide a lower bound for $|\hat{F}(k)| + e^\rho\cdot |\bar{F}(k)|$, which can be equivalently expressed as $|F(k)| + (e^\rho - 1)\cdot |\bar{F}(k)|$ since $\hat{F}(k) \cup \bar{F}(k) = F(k)$.
By definition, $g(k)$ edges are matched by \textsc{Greedy} in the first $k$ rounds, which implies that $2g(k)$ vertices are matched.
Thus we have a lower bound for $|\bar{F}(k)|$:
\begin{equation*}
    |\bar{F}(k)| \geq n - 2g(k).
\end{equation*}

In the following, we focus on lower bounding $|F(k)|$.

Let $\bx$ be the optimal solution to the Natural LP.
Intuitively speaking, since the type-graph satisfies $\LP^*\geq (1-\epsilon)\cdot n$, we have $x_u = \sum_{e\in E(u)} x_e \approx 1$ for almost all vertices.
On the other hand, Constraint~\eqref{constraint} of the LP places an upper bound of $1-e^{-|E(u)|}$ on $x_u$.
Therefore, for almost all vertices $u$ we have $|E(u)| \gg 1$, which enables us to show that in the early stage of \textsc{Greedy} (when only a small fraction of vertices are matched), the matching rate of \textsc{Greedy} is very high since there are many free edges.
Formalizing this idea, we give a lower bound of $\E[|F(k)|]$ with respect to (w.r.t.) the expected size of matched edges $\E[g(k)]$.

\begin{lemma} \label{lemma:lower-bound-of-expected-free-edges}
    For any $k\geq 0$, we have
    \begin{equation*}
    \E[|F(k)|] \geq (n-\E[g(k)]) \cdot \ln\left( \frac{n-\E[g(k)]}{\epsilon\cdot n+\E[g(k)]} \right).
\end{equation*}
\end{lemma}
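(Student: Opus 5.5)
The plan is to prove a \emph{pointwise} version of the inequality for every realization of the first $k$ rounds, and then pass to expectations with Jensen's inequality, using Claim~\ref{claim:convexity_of_f}. Fix a realization, write $g = g(k)$, and let $U$ be the set of vertices not matched by \textsc{Greedy} after $k$ rounds. Since the type-graph admits a perfect matching, $|V| = 2n$ and hence $|U| = 2n - 2g$. For $u \in U$ let $d_u = |F(k)\cap E(u)|$ be the number of free edges incident to $u$. Every free edge has both endpoints in $U$, so $|F(k)| = \frac12 \sum_{u\in U} d_u$.

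Next I split the optimal Natural LP solution $\bx$ into free and non-free edges, using $\sum_e x_e = \LP^* \ge (1-\epsilon) n$.
\begin{itemize}
\item \emph{Non-free edges.} Each has at least one endpoint matched by \textsc{Greedy}. Constraint~\eqref{constraint} with $S = E(v)$ gives $x_v \le 1$ for every vertex $v$. Hence the non-free edges contribute at most $\sum_{v \text{ matched}} x_v \le 2g$.
\item \emph{Free edges.} Applying Constraint~\eqref{constraint} at each $u\in U$ with $S = F(k)\cap E(u)$ gives
\[
\sum_{e\in F(k)} x_e = \tfrac12\sum_{u\in U}\sum_{e\in F(k)\cap E(u)} x_e \le \tfrac12 \sum_{u\in U}\left(1-e^{-d_u}\right).
\]
By Claim~\ref{claim:sum_of_1-e^d} with $\ell = |U| = 2(n-g)$, this is at most $(n-g)\left(1-e^{-|F(k)|/(n-g)}\right)$.
\end{itemize}

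Combining the two bounds gives $(1-\epsilon)n - 2g \le (n-g)\left(1 - e^{-|F(k)|/(n-g)}\right)$. Rearranging,
\[
e^{-|F(k)|/(n-g)} \le \frac{(n-g) - (1-\epsilon)n + 2g}{n-g} = \frac{\epsilon n + g}{n-g}.
\]
Taking logarithms yields $|F(k)| \ge (n-g)\ln\frac{n-g}{\epsilon n+g} =: f(g)$.

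This bound is derived only when $g<n$, and we may assume $n-g>0$: if every vertex is matched, the perfect-matching structure gives $g = n$, where $f$ extends continuously to $0$. When $g \ge (1-\epsilon)n/2$, the right-hand side is nonpositive and the inequality holds trivially because $|F(k)|\ge 0$. So $|F(k)| \ge f(g(k))$ holds for every realization.

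Finally, $f$ is exactly the function of Claim~\ref{claim:convexity_of_f} with $c_1 = n$ and $c_2 = \epsilon n$, hence convex on $[0,n)$ (extended continuously at the endpoint). Jensen's inequality then gives $\E[|F(k)|] \ge \E[f(g(k))] \ge f(\E[g(k)])$, which is the lemma.

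The step I expect to need the most care is this last passage from the pointwise bound to the bound in expectation. One must check that $f$ is convex on the entire range of $g(k)$, including $g=0$ and values near $n$, and not only on the open interval of Claim~\ref{claim:convexity_of_f}. One must also check that the pointwise bound is valid even where $f$ is negative, which it is because $|F(k)| \ge 0$. A minor point is that the type-graph may be a multigraph. This is harmless, because the counting $|F(k)| = \frac12\sum_u d_u$ and the per-vertex LP constraints treat parallel edges as distinct.
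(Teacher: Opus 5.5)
Your proposal is correct and follows the paper's proof essentially step for step. The paper also first establishes the per-realization bound $|F(k)| \geq (n-g(k))\ln\frac{n-g(k)}{\epsilon n+g(k)}$ (Lemma~\ref{lemma:lower-bound-of-free-edges}); it does this by bounding the LP mass on non-free edges by $2g(k)$ and the mass on free edges via Constraint~\eqref{constraint} and Claim~\ref{claim:sum_of_1-e^d}, and then passes to expectations with Jensen's inequality using Claim~\ref{claim:convexity_of_f}. Your additional care about the boundary cases ($g(k)=n$, a nonpositive right-hand side, and extending convexity to the closed interval) is valid and fills in details the paper leaves implicit.
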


Before proving Lemma~\ref{lemma:lower-bound-of-expected-free-edges}, we first give the following intermediate statement, which serves as a ``per realization'' version of Lemma~\ref{lemma:lower-bound-of-expected-free-edges}.

\begin{lemma} \label{lemma:lower-bound-of-free-edges}
    For any $k\geq 0$, we have
    \begin{equation} \label{equation:lower-bound-of-free-edges}
        |F(k)| \geq (n-g(k)) \cdot \ln\left( \frac{n-g(k)}{\epsilon\cdot n+g(k)} \right).
    \end{equation}
\end{lemma}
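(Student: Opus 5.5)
Fix a realization of the first $k$ rounds, and let $U$ be the set of vertices left unmatched by \textsc{Greedy}. Since $|V|=2n$ and $2g(k)$ vertices are matched, $|U| = 2n-2g(k)$. The plan is to sum the Natural LP constraint~\eqref{constraint} over the vertices of $U$, each time applied to that vertex's free edges. Let $\bx$ be an optimal Natural LP solution with $\sum_e x_e = \LP^* \geq (1-\epsilon)n$, and write $x_u=\sum_{e\in E(u)}x_e$. Then $\sum_{u\in V} x_u = 2\LP^* \geq 2(1-\epsilon)n$.

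\textbf{Step 1 (LP mass at unmatched vertices).} Each matched vertex has $x_u\le 1$, since the LP constraint with $S=E(u)$ gives $x_u \le 1-e^{-|E(u)|}\le 1$. Hence
\[
\sum_{u\in U} x_u \;\ge\; 2(1-\epsilon)n - 2g(k).
\]
For $u\in U$, split $E(u)$ into the free edges $F_u$ (other endpoint also in $U$) and the non-free edges $E(u)\setminus F_u$ (other endpoint matched). Applying constraint~\eqref{constraint} to $S=F_u$ gives $\sum_{e\in F_u} x_e \le 1-e^{-|F_u|}$.

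\textbf{Step 2 (bounding the non-free mass).} A non-free edge at $u\in U$ has its other endpoint $v$ matched. Summing over all $u\in U$, the total non-free mass is therefore at most $\sum_{v\ \text{matched}} x_v \le 2g(k)$. Combining with Step 1,
\[
\sum_{u\in U}\left(1-e^{-|F_u|}\right) \;\ge\; 2(1-\epsilon)n-4g(k).
\]
Also $\sum_{u\in U}|F_u| = 2|F(k)|$. By Claim~\ref{claim:sum_of_1-e^d} with $\ell=|U|$,
\[
|U|\left(1-e^{-2|F(k)|/|U|}\right) \;\ge\; 2(1-\epsilon)n - 4g(k).
\]
Substituting $|U|=2(n-g)$ with $g=g(k)$, this gives $e^{-|F|/(n-g)} \le \frac{\epsilon n + g}{n-g}$, that is,
\[
|F(k)| \;\ge\; (n-g)\ln\frac{n-g}{\epsilon n+g},
\]
which is exactly the claimed bound~\eqref{equation:lower-bound-of-free-edges}.

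\textbf{Main obstacle.} The delicate point is Step 2: we must charge the non-free LP mass at unmatched vertices to matched vertices without double counting. Each non-free edge is counted once, at its unmatched endpoint. Its $x_e$ is also part of $x_v$ for its matched endpoint $v$, and $x_v\le 1$, so the charging is valid. (The bound $x_v \le 1$ is also what gives Step 1.)

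Two edge cases need a remark. If the right-hand side $2(1-\epsilon)n-4g$ is negative, or the logarithm is negative, the bound holds trivially because $|F(k)|\ge 0$. If $U=\emptyset$, then $g=n$ and the right-hand side is $0$. Parallel edges cause no difficulty, since all the counts above are over edge types.
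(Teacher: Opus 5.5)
Your proof is correct and follows essentially the same route as the paper. You lower-bound the LP mass on free edges by $(1-\epsilon)n-2g(k)$, here doubled: you sum over $U$ and charge the cross edges to matched vertices, which gives exactly the paper's bound $\LP^*-\sum_{u\notin U(k)}\sum_{e\in E(u)}x_e$. You then upper-bound that same mass via Constraint~\eqref{constraint} at each unmatched vertex and Claim~\ref{claim:sum_of_1-e^d}; the paper just subtracts the non-free edge mass from $\LP^*$ directly instead of passing through $\sum_{u\in U}x_u$.
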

\begin{proof}
    Let $U(k)$ be the set of vertices that have not been matched by \textsc{Greedy} by the end of round $k$.
    % Recall that $g(k)$ edges are matched by \textsc{Greedy} in the first $k$ rounds.
    We have $|U(k)| = 2n-2g(k)$.
    For each vertex $u\in U(k)$, let $f_u = |E(u)\cap F(k)|$ be the number of free edges incident to $u$.
    Now let's consider the contribution of all free edges to $\LP^*$ under the optimal solution $\bx$ of the LP.
    On the one hand, we have
    \begin{equation} \label{equation:lower-bound-of-partial-LP*}
        \sum_{e \in F(k)} x_e = \LP^* - \sum_{e\notin F(k)} x_e \geq \LP^* - \sum_{u\notin U(k)}\sum_{e\in E(u)} x_e \geq (1-\epsilon)n-2g(k),
    \end{equation}
    where the second inequality holds because (all the incident edges of) each vertex $u\notin U(k)$ contributes at most $1$ to $\LP^*$.
    On the other hand, we have
    \begin{equation*}
        \sum_{e \in F(k)} x_e = \frac{1}{2} \sum_{u\in U(k)}\sum_{e \in E(u)\cap F(k)} x_e \leq \frac{1}{2} \sum_{u\in U(k)} \left( 1-e^{-f_u} \right),
    \end{equation*}
    where the equality holds since every free edge is incident to two unmatched vertices in $U(k)$, and the inequality holds due to Constraint~\eqref{constraint}.
    
    Recall that by the definition of $f_u$, we have $\sum_{u\in U(k)} f_u = 2\cdot |F(k)|$.
    % \begin{equation*}
    %     \sum_{u\in U(k)} f_u = 2\cdot |F(k)|.
    % \end{equation*}
    
    By Claim~\ref{claim:sum_of_1-e^d}, we have
    \begin{equation*}
        \sum_{u\in U(k)} \left( 1-e^{-f_u} \right) \leq |U(k)| \cdot \left( 1-e^{-\frac{2|F(k)|}{|U(k)|}} \right) = 2 (n-g(k)) \cdot \left( 1-e^{-\frac{|F(k)|}{n-g(k)}} \right),
    \end{equation*}
    which leads to
    \begin{equation} \label{equation:upper-bound-of-partial-LP*}
        \sum_{e \in F(k)} x_e \leq (n-g(k))\cdot \left( 1-e^{-\frac{|F(k)|}{n-g(k)}} \right).
    \end{equation}

    Combining Equation~\eqref{equation:lower-bound-of-partial-LP*} and \eqref{equation:upper-bound-of-partial-LP*}, we have
    \begin{equation*}
        |F(k)| \geq (n-g(k)) \cdot \ln\left( \frac{n-g(k)}{\epsilon\cdot n+g(k)} \right),
    \end{equation*}
    which finishes the proof.
\end{proof}

Now we can prove Lemma~\ref{lemma:lower-bound-of-expected-free-edges}.

\begin{proofof}{Lemma~\ref{lemma:lower-bound-of-expected-free-edges}}
    By Claim~\ref{claim:convexity_of_f}, the RHS $(n-g(k)) \cdot \ln\left( \frac{n-g(k)}{\epsilon\cdot n+g(k)} \right)$ of Equation~\eqref{equation:lower-bound-of-free-edges} is convex in $g(k)$.
    Therefore, by Lemma~\ref{lemma:lower-bound-of-free-edges} and Jensen's Inequality, we have
    \begin{equation*}
        \E[|F(k)] \geq \E\left[ (n-g(k)) \cdot \ln\left( \frac{n-g(k)}{\epsilon\cdot n+g(k)} \right) \right] \geq (n-\E[g(k)]) \cdot \ln\left( \frac{n-\E[g(k)]}{\epsilon\cdot n+\E[g(k)]} \right),
    \end{equation*}
    which finishes the proof.
\end{proofof}

Putting all lower bounds together, we obtain:
\begin{align}
    \E[w(k+1)] - \E[w(k)] & = \frac{\E[|F(k)|]+(e^{\rho}-1)\cdot \E[|\bar{F}(k)|]}{m} \nonumber\\
    & \geq \frac{n-\E[g(k)]}{m} \cdot \ln\left( \frac{n-\E[g(k)]}{\epsilon\cdot n+\E[g(k)]} \right) + \frac{e^{\rho}-1}{m}\cdot (n-2\E[g(k)]) \nonumber\\
    & \geq \frac{n-\E[w(k)]}{m} \cdot \ln\left( \frac{n-\E[w(k)]}{\epsilon\cdot n+\E[w(k)]} \right) + \frac{e^{\rho}-1}{m}\cdot (n-2\E[w(k)]),
\label{equation:lower-bound-on-increment-of-E[g(k)]}
\end{align}
where the last inequality holds since the function $b(z) = (1-z)\cdot \ln\left( \frac{1-z}{\epsilon + z} \right) + (e^{\rho}-1)\cdot (1-2z)$ is decreasing in $z$ (when $z \in (0, (1-\epsilon)/2)$) and $w(k) = \hat{g}(k) + e^\rho\cdot \bar{g}(k) \geq g(k)$.
Therefore, we can lower bound the increment of $\E[w(k)]$ as a function of $\E[w(k)]$.
We make use of this characterization to give a lower bound on $\E[w((1-\rho)\cdot m)]$.

\begin{lemma} \label{lemma:lower-bound-of-E[g]}
    Let $f(x)$ with $f(0)=0$ be a function of $x\in [0,1)$ such that $f'(x)=b(f(x))$, where
    \begin{equation*}
        b(z) = (1-z)\cdot \ln\left( \frac{1-z}{\epsilon + z} \right) + (e^{\rho}-1)\cdot (1-2z).
    \end{equation*}
    For any $k\geq 0$ where $\E[w(k)] < \frac{1-\epsilon}{2}\cdot n$, we have $\E[w(k)]\geq n\cdot f(\frac{k}{m})$.
\end{lemma}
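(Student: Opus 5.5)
We argue by comparing the discrete recursion~\eqref{equation:lower-bound-on-increment-of-E[g(k)]} with the ODE $f'=b(f)$, using that $b$ is decreasing on $[0,\frac{1-\epsilon}{2})$. Write $a_k=\E[w(k)]/n$. Then~\eqref{equation:lower-bound-on-increment-of-E[g(k)]} reads $a_{k+1}\ge a_k+\frac{1}{m}b(a_k)$ whenever $a_k<\frac{1-\epsilon}{2}$. Let $y_k=f(k/m)$. Since $f'=b(f)>0$ on the relevant range, $f$ is increasing. Hence $f'(x)=b(f(x))$ is non-increasing in $x$, i.e.\ $f$ is concave. Concavity gives $y_{k+1}\le y_k+\frac1m f'(k/m)=y_k+\frac1m b(y_k)$. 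So the exact solution sampled on the grid is a sub-solution of the recursion, while $a_k$ is a super-solution.

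We then prove $a_k\ge y_k$ by induction on $k$. The base case is $a_0=0=y_0$. For the inductive step, assume $a_k\ge y_k$ and $a_{k+1}<\frac{1-\epsilon}{2}$, so that $a_k<\frac{1-\epsilon}{2}$ as well, since $w$ is non-decreasing. The difficulty is that $b$ is decreasing, so $b(a_k)\le b(y_k)$, and the naive chain $a_{k+1}\ge a_k+b(a_k)/m$ does not immediately dominate $y_k+b(y_k)/m$. We handle this with the map $\phi(z)=z+b(z)/m$ and argue that $\phi$ is non-decreasing on $[0,\frac{1-\epsilon}{2})$ once $m$ is large. Indeed, $\phi'(z)=1+b'(z)/m$, and $|b'(z)|$ is bounded on this interval by a constant depending only on $\epsilon$ and $\rho$: for $z\in[0,\frac{1-\epsilon}{2}]$ we have $\frac{1-z}{\epsilon+z}\in[\frac{1+\epsilon}{1+\epsilon},\frac1\epsilon]$, which keeps the logarithm and its derivative bounded. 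Hence $\phi'>0$ for $m$ sufficiently large, which we assume. The chain is then $a_{k+1}\ge\phi(a_k)\ge\phi(y_k)\ge y_{k+1}$, which completes the induction.

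The main obstacle is exactly this monotonicity step. It is the discrete Euler-scheme comparison with a decreasing right-hand side, and it requires the Lipschitz bound on $b$. If $b'$ were unbounded near some endpoint, we would instead absorb the error into an $o(1)$ term, consistent with the paper's convention of omitting $o(1)$ terms.

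We also need $y_k$ to stay in the domain where $b$ is decreasing. If $y_k\ge\frac{1-\epsilon}{2}$, then since $a_k\ge y_k$ at the previous step the hypothesis would fail, so the induction only ever runs inside the valid range. Finally, we note that $b>0$ there: $1-2z>0$ and $\frac{1-z}{\epsilon+z}>1$. This positivity justifies the monotonicity and concavity of $f$ used above.
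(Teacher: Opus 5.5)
Your argument is correct, but it takes a different route from the paper. You use a discrete comparison principle for the Euler scheme, with three ingredients:
\begin{itemize}
\item concavity of $f$ makes $y_k=f(k/m)$ a sub-solution, $y_{k+1}\le\phi(y_k)$;
\item the recursion $a_{k+1}\ge a_k+b(a_k)/m$ makes $a_k=\E[w(k)]/n$ a super-solution, $a_{k+1}\ge\phi(a_k)$;
\item monotonicity of $\phi(z)=z+b(z)/m$ closes the induction.
\end{itemize}
The paper instead separates variables. It sets $B(y)=\int_0^y \mathrm{d}z/b(z)$, observes $B(f(x))=x$, and telescopes
\[
B(a_k)=\sum_{i<k}\int_{a_i}^{a_{i+1}}\frac{\mathrm{d}z}{b(z)}\ \ge\ \sum_{i<k}\frac{a_{i+1}-a_i}{b(a_i)}\ \ge\ \frac{k}{m}=B\bigl(f(k/m)\bigr).
\]
It then concludes from the fact that $B$ is increasing.

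The paper's route uses only positivity and monotonicity of $b$ on $[0,a_k]$. It needs no bound on $b'$, no concavity of $f$, and works for every $m$. Yours requires $m\ge\sup|b'|$, which here is of order $1/\epsilon$. That is harmless under the paper's standing assumption that $m$ is large, but it is an extra hypothesis. In exchange, your argument is the familiar sub/super-solution comparison and makes explicit why a monotone update map is what drives the conclusion.

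One point in your version needs tightening. Concavity of $f$ on $[k/m,(k+1)/m]$ requires $b$ to be decreasing on all of $[y_k,y_{k+1}]$, and you need this before you know $y_{k+1}<\frac{1-\epsilon}{2}$. Since $b(\frac{1-\epsilon}{2})=(e^\rho-1)\epsilon\ge 0$ and $b(\frac12)<0$, the solution $f$ increases toward a zero of $b$ lying in $[\frac{1-\epsilon}{2},\frac12)$. So $y_{k+1}$ can overshoot $\frac{1-\epsilon}{2}$. You should therefore verify $b'<0$ on $[0,\frac12]$, not only on $[0,\frac{1-\epsilon}{2})$. This is immediate from
\[
b'(z)=-\ln t-1-t-2(e^\rho-1), \qquad t=\frac{1-z}{\epsilon+z}\ge\frac{1}{1+2\epsilon},
\]
since this expression is negative for the small $\epsilon$ used in the paper.
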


Intuitively speaking, $b(z)$ is a continuous function that lower bounds the increment of $\E[w(k)]$ when ${\E[w(k)]}/{n} = z$.
Therefore $f({k}/{m})$ serves as a lower bound for ${\E[w(k)]}/{n}$.
We leave the proof of the above lemma to Appendix~\ref{appendix}.
Although we are not able to derive an analytic form for $f(x)$, its numerical form can be computed to a very high precision for any fixed $\epsilon$ with the assistance of a computer program (see Figure~\ref{figure:function-f}).
By the above discussion, for any $\rho\in [0,1]$ that satisfies $\E[w((1-\rho)m)] \leq \frac{1-\epsilon}{2}\cdot n$, we can lower bound $\E[\ALG]$ as
\begin{align*}
    \E[\ALG] & \geq (1-e^{-\rho})\cdot n + e^{-2\rho} \cdot \E[w((1-\rho)m)]
    \geq (1-e^{-\rho})\cdot n + e^{-2\rho} \cdot n\cdot f(1-\rho) \\
    & \geq \left( 1-e^{-\rho} + e^{-2\rho} \cdot f(1-\rho) \right)\cdot \E[\OPT].
\end{align*}

\begin{figure}[htbp]
\centering
\resizebox{0.6\textwidth}{!}{
\begin{tikzpicture}
\begin{axis}[
    width=12cm,
    height=8cm,
    xlabel={$x$},
    ylabel={$f(x)$},
    grid=major,
    legend pos=north west,
    xmin=0,              
    ymin=0,              
    xmax=0.1,
    scaled x ticks=false, 
    scaled y ticks=false, 
    xticklabel style={/pgf/number format/fixed}, 
    yticklabel style={/pgf/number format/fixed}, 
    xtick={0,0.02,0.04,0.06,0.08,0.1},
    ytick={0.05,0.1,0.15,0.2,0.25}
]

\addplot[blue, line width=1.5pt] table {data/solution_data.txt};
\addplot[red, mark=*, mark size=2pt] coordinates {(0.05, 0.16884)}; 

\node[] at (axis cs:0.063,0.15) {$(0.05, 0.168)$};
\draw[black, dashed, thin] (axis cs:0.05,0.16884) -- (axis cs:0.05,0);
\draw[black, dashed, thin] (axis cs:0.05,0.16884) -- (axis cs:0,0.16884);

\end{axis}
\end{tikzpicture}
}
\caption{Function $f$ when $\epsilon = 0.01$.}
\label{figure:function-f}
\end{figure}

\begin{theorem}
    For instances whose type-graph admits perfect matchings, \textsc{Boosted Suggested Matching} is at least $0.6383$-competitive.
\end{theorem}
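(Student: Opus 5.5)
We split on the value of $\LP^*$ relative to $n$ and handle each case with results already in the paper.

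\emph{Case $\LP^* < (1-\epsilon)n$.} The algorithm sets $\rho=1$, so it runs \textsc{Suggested Matching}. By the Observation it is $\frac{1-1/e}{1-\epsilon}$-competitive. For the $\epsilon$ we fix below, say $\epsilon$ around $0.01$ or larger, this is at least $0.6383$ provided $\frac{0.6321}{1-\epsilon}\ge 0.6383$, i.e.\ $\epsilon \ge 0.0097$. We therefore choose $\epsilon$ just above this threshold.

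\emph{Case $\LP^*\ge(1-\epsilon)n$.} The chain of inequalities just before the theorem gives
\[
\E[\ALG]\ge \left(1-e^{-\rho}+e^{-2\rho}f(1-\rho)\right)\cdot n\ge \left(1-e^{-\rho}+e^{-2\rho}f(1-\rho)\right)\cdot\E[\OPT].
\]
Here $f$ is the ODE solution of Lemma~\ref{lemma:lower-bound-of-E[g]} with parameters $\epsilon$ and $\rho$, and the last step uses $n\ge \LP^*\ge\E[\OPT]$ by Lemma~\ref{lemma:LP*>=E[OPT]}. This bound requires $\E[w((1-\rho)m)] < \frac{1-\epsilon}{2}n$, which is the hypothesis of Lemma~\ref{lemma:lower-bound-of-E[g]}.

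If that hypothesis fails, we need a separate argument. The second phase contributes $\E[m_2]\ge e^{-2\rho}\E[w((1-\rho)m)]$ by Lemma~\ref{lemma:lower-bound-of-E[m2]}. Since $w$ is monotone in $k$, we have $\E[w((1-\rho)m)]\ge \frac{1-\epsilon}{2}n$, so
\[
\E[\ALG]\ge \left(1-e^{-\rho}+e^{-2\rho}\tfrac{1-\epsilon}{2}\right)n.
\]
It suffices to check that this exceeds the claimed ratio at the chosen $\rho$. Alternatively, note that $f(1-\rho)\le \frac{1-\epsilon}{2}$ in the regime we use (e.g.\ $f(0.05)\approx 0.168$), so the bound $\min\{f(1-\rho),\frac{1-\epsilon}{2}\}=f(1-\rho)$ applies in both sub-cases.

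The remaining work is numerical. We fix $\epsilon\approx 0.0097$ to $0.01$ and optimize $\rho$ over $[0,1]$. Since the first-phase rate $e^{-\rho}$ trades off against the high early \textsc{Greedy} rate, we expect the optimum near $1-\rho\approx 0.05$. At $\epsilon=0.01$ and $\rho=0.95$, Figure~\ref{figure:function-f} gives $f(0.05)\approx0.1688$, so
\[
1-e^{-0.95}+e^{-1.9}\cdot 0.1688\approx 0.6133+0.1496\cdot0.1688\approx 0.6386.
\]
This is consistent with $0.6383$.

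The main obstacle is making this computation rigorous. We must solve the ODE $f'=b(f)$, $f(0)=0$, numerically with a certified lower bound. Because $b$ is decreasing in $z$, any Euler-type scheme that evaluates $b$ at the right endpoint of each step yields a valid lower bound, so a fine grid with this conservative rule certifies $f(1-\rho)\ge 0.1687$ or similar. We then balance $\epsilon$ between the two cases so that both give at least $0.6383$. If $\epsilon=0.0097$ makes $f$ slightly smaller, we re-optimize $\rho$ jointly with $\epsilon$ by computer search. The minimum of the two case bounds yields the theorem.
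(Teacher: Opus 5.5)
Your proposal is correct and follows the paper's proof essentially verbatim. The paper also fixes $\epsilon=0.01$, uses the $\rho=1$ fallback via the Observation, and takes $\rho=0.95$ with the same two sub-cases on whether $\E[w((1-\rho)m)]$ exceeds $\frac{1-\epsilon}{2}n$, giving about $0.687$ and $1-e^{-0.95}+e^{-1.9}f(0.05)>0.6383$ respectively. Your right-endpoint (implicit) Euler certification of $f(0.05)$ is a sound addition the paper omits. Your hedge about lowering $\epsilon$ is backwards, though harmless: decreasing $\epsilon$ increases $b$ and hence $f$, so that case never arises.
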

\begin{proof}
    Fix $\epsilon = 0.01$.
    If the type-graph $G$ satisfies $\LP^* < (1-\epsilon)\cdot n$, then (by setting $\rho=1$) the competitive ratio is at least
        \begin{equation*}
            \frac{1-1/e}{1-\epsilon} = \frac{1-1/e}{1-0.01} > 0.6383.
        \end{equation*}

    Otherwise, we set $\rho=0.95$.
    If $\E[w((1-\rho)m)] > \frac{1-\epsilon}{2}\cdot n$, then the competitive ratio is at least
    \begin{equation*}
        1-e^{-\rho} + e^{-2\rho} \cdot \frac{1-\epsilon}{2} = 1-e^{-0.95} + e^{-1.9} \cdot \frac{1-0.01}{2} > 0.6872 > 0.6383.
    \end{equation*}
    If $\E[w((1-\rho)m)] \leq \frac{1-\epsilon}{2}\cdot n$, then the competitive ratio is at least
    \begin{equation*}
        1-e^{-\rho} + e^{-2\rho} \cdot f(1-\rho) = 1-e^{-0.95} + e^{-1.9} \cdot f(0.05) > 0.6383. \qedhere
    \end{equation*}
\end{proof}

\subsection{General Instances}
\label{section:without-perfect-matching}

In this section, we consider general instances, which may not admit perfect matchings.
As before, our goal is to lower bound $\E[ w((1-\rho)m)]$, where $w(k) = \hat{g}(k) + e^\rho\cdot \bar{g}(k)$.
We define $\Delta(k) = |\hat{F}(k)| + e^{\rho}\cdot |\bar{F}(k)|$, the total weight of the free edges at the end of round $k$.
Note that both $w(k)$ and $\Delta(k)$ are random variables.
Recall that by Equation~\eqref{equation:relation-between-greedy-and-free-edges} we have
\begin{equation*}
        \E[w(k+1)] - \E[w(k)] = \frac{\E[\Delta(k)]}{m}.
\end{equation*}

Similar to the case with perfect matching, we would like to lower bound $\E[\Delta(k)]$ for each round $k$, which measures the expected increase of $w(k)$.
The following lemma is crucial to our analysis, which states that when $w(k)$ is small, $\Delta(k)$ must be large.

\begin{lemma} \label{lemma:small_w_large_delta}
   If $w(k) \leq \epsilon'\cdot n$, then $\Delta(k) \geq c(\rho,\epsilon,\epsilon')\cdot n$, where
    \begin{equation*}
        c(\rho,\epsilon,\epsilon') = e^\rho + 1+\ln\left( \frac{2e^{\rho}-1}{e-1} \right) - \frac{e(2e^{\rho}-1)}{e-1}\cdot \epsilon - \left( 3e^\rho-2+\frac{2e^\rho-1}{e-1} \right) \cdot 2\epsilon'.
    \end{equation*}
\end{lemma}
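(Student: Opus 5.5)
The plan is to redo the argument of Lemma~\ref{lemma:lower-bound-of-free-edges} for general graphs. The LP mass carried by free edges is charged to the unmatched vertices of $\hat{V}$ only, and edges in $\bar{E}\cup M^*$ get a larger weight ($e^\rho$ instead of $1$) in $\Delta(k)$. Since $w(k)\geq g(k)$, the hypothesis $w(k)\le \epsilon' n$ gives $g(k)\le \epsilon' n$. So at most $2\epsilon' n$ vertices are matched by \textsc{Greedy}, and at least $n-2\epsilon' n$ edges of $M^*$ are still free. Let $\hat{U}=\hat{V}\cap U(k)$, so $|\hat{U}|\ge 2n-2\epsilon' n$. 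For $u\in\hat{U}$, split its free edges into $a_u$ free edges of $\hat{E}\setminus M^*$, $b_u$ free edges of $\bar{E}$, and $s_u\in\{0,1\}$ for whether its $M^*$-edge is free. Every edge of $\bar{E}$ has exactly one endpoint in $\hat{V}$, because $\bar{V}$ is independent. Hence $\Delta(k)=\tfrac12\sum_{u\in\hat{U}} a_u + e^\rho\bigl(\sum_{u\in\hat{U}} b_u + \tfrac12\sum_{u\in\hat{U}} s_u\bigr)$.

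Next I derive a lower bound on the LP mass at $\hat{U}$. Write $X_{\hat{E}}$ and $X_{\bar{E}}$ for the LP mass on free edges in $\hat{E}$ and in $\bar{E}$. As in \eqref{equation:lower-bound-of-partial-LP*}, $X_{\hat{E}}+X_{\bar{E}}\ge (1-\epsilon)n-2g(k)$. The mass on edges in $\bar{E}$ is at most the mass at the hat endpoints, and the total mass at hat vertices is at most $2n$. This gives $2(\text{all }\hat{E}\text{-mass})+(\text{all }\bar{E}\text{-mass})\le 2n$, so edges in $\bar{E}$ cannot carry much while $\LP^*\ge(1-\epsilon)n$. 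From these inequalities I will extract a constraint of the form
\[
\sum_{u\in\hat{U}} \bigl(1-e^{-(a_u+b_u+s_u)}\bigr)\;\ge\; \sum_{u\in \hat{U}} x_u(F(k)) \;\ge\; 2\bigl((1-\epsilon)n-2g(k)\bigr)-(\text{correction from }\bar{E}),
\]
using Constraint~\eqref{constraint} at each $u\in\hat{U}$. The correction must be controlled so that an edge of $\bar{E}$, counted once on the left, is worth the same as half an $\hat{E}$-edge. That is exactly the ratio that makes its weight $e^\rho$ versus $\tfrac12$ per endpoint matter.

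The core step is then a relaxed optimization: minimize $\Delta$ over nonnegative reals $a_u,b_u$ and $s_u\in[0,1]$, subject to the mass constraint above and $\sum_u s_u\ge 2n-4\epsilon' n$. I expect the optimum to be symmetric across vertices, by the concavity in Claim~\ref{claim:sum_of_1-e^d}. Per vertex, the $M^*$-edge contributes degree $1$ at cost $e^\rho/2$, and each further unit of degree costs $\tfrac12$ via $a_u$, or $e^\rho$ via $b_u$, which is never better. So the per-vertex problem is: with degree $d=1+a$, reach mass $1-e^{-d}\approx 1-O(\epsilon)$ at cost $e^\rho/2+a/2$. Set up the Lagrangian, and also compare against the alternative of using $\bar{E}$-edges to absorb mass lost at hat vertices. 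The $\ln\frac{2e^\rho-1}{e-1}$ term should come from the first-order condition balancing the marginal gain $e^{-d}$ against the two costs $\tfrac12$ and $e^\rho-\tfrac12$. The $e^{-1}$ appears because one unit of degree comes from $M^*$, and the factor $\frac{e}{e-1}$ in front of $\epsilon$ is the derivative of the optimal cost with respect to the mass deficit. Finally, plug in $g\le \epsilon' n$ and $|\hat{U}|\ge 2n-2\epsilon' n$, and linearize in $\epsilon,\epsilon'$ using the concavity of $\ln$, to get the stated $c(\rho,\epsilon,\epsilon')\cdot n$.

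The main obstacle is the bookkeeping of $\bar{E}$-edges. Their mass is counted only once at a hat vertex, yet the LP may route a lot of mass through them when $|\bar{V}|$ is large. I must find the right inequality linking $X_{\bar{E}}$ to the hat-vertex budget so that these edges never make $\Delta$ small. Only then does the minimization reduce cleanly to the one-dimensional KKT computation. If the direct charging gives the wrong constant, I would parametrize by the fraction $\theta$ of LP mass on $\bar{E}$, minimize over $\theta$ separately, and show the worst case is $\theta=0$ or at an interior point yielding the logarithmic term.
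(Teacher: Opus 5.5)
\paragraph{Verdict.} Your skeleton matches the paper's, but the step you call the main obstacle and leave open is the one that produces $c(\rho,\epsilon,\epsilon')$, so this is a genuine gap. The shared skeleton is: lower-bound the LP mass on free edges by $(1-\epsilon)n-2g(k)$, charge it to the unmatched vertices of $\hat{V}$ via Constraint~\eqref{constraint} and Claim~\ref{claim:sum_of_1-e^d}, use $|F(k)\cap M^*|\ge(1-2\epsilon')n$, and optimize.

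\paragraph{The missing bound.} What is missing is an upper bound on the correction $X_{\bar{E}}$ per free $\bar{E}$-edge. The paper takes the singleton case $S=\{e\}$ of Constraint~\eqref{constraint}, $x_e\le 1-\frac1e$, so $X_{\bar{E}}\le(1-\frac1e)|F(k)\cap\bar{E}|$. Set $\alpha=\frac{2|F(k)\cap\hat{E}|+|F(k)\cap\bar{E}|}{2n}$ and $\beta=\frac{|F(k)\cap\bar{E}|}{n}$. Then
\[
1-\epsilon-2\epsilon'\le 1-e^{-\alpha}+\tfrac12\bigl(1-\tfrac1e\bigr)\beta,
\qquad
\frac{\Delta(k)}{n}\ge\alpha+\bigl(e^{\rho}-\tfrac12\bigr)\beta+(e^{\rho}-1)(1-2\epsilon').
\]
Substituting the resulting lower bound on $\beta$ leaves $\alpha+Ce^{-\alpha}$ with $C=\frac{e(2e^\rho-1)}{e-1}$. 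Minimizing at $\alpha=\ln C$ gives exactly $c$.

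Your first-order condition balances $e^{-d}$ against the costs $\frac12$ and $e^\rho-\frac12$, which only fixes the ratio $2e^\rho-1$. The extra $\frac{e}{e-1}$ inside $C$, and every $e-1$ in $c$, is $1/(1-\frac1e)$, the cap on the LP value of one free $\bar{E}$-edge. It does not come from the $M^*$-edge contributing a unit of degree.

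Weaker substitutes do not reach $c$. The box bound $x_e\le 1$ replaces $\ln\frac{e(2e^\rho-1)}{e-1}$ by $\ln(2e^\rho-1)$. The per-vertex bound $\sum_u(1-e^{-b_u})$, averaged via Claim~\ref{claim:sum_of_1-e^d}, is also looser, because $1-e^{-t}>(1-\frac1e)t$ on $(0,1)$, which is where the average $\bar{E}$-degree lies at the minimizer. For $\rho>0$ and small $\epsilon,\epsilon'$, both end strictly below $c$. Using integrality before averaging does recover the paper's bound, since $1-e^{-b}\le(1-\frac1e)b$ for integers $b\ge0$.

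\paragraph{The heuristics around this step.} Several of your claims here are wrong.
\begin{itemize}
\item Your two inequalities give no upper bound on the $\bar{E}$-mass: eliminating the $\hat{E}$-mass leaves only $X_{\bar{E}}\ge-2\epsilon n$.
\item $\bar{E}$ can in fact carry almost all of $\LP^*$. Give each $u_i$ of $M^*=\{(u_i,v_i)\}$ many private neighbours in $\bar{V}$. Then $\{u_i\}$ is a vertex cover, $\LP^*\approx n$ with nearly all of it on $\bar{E}$, and the vertices of $\hat{V}$ average mass about $\frac12$.
\item Hence the per-vertex target ``mass $1-O(\epsilon)$ at every hat vertex'' is the perfect-matching picture of Lemma~\ref{lemma:lower-bound-of-free-edges}, not the general one.
\item The claim that $b_u$ is ``never better'' is false. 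A free $\bar{E}$-edge costs $e^\rho$ in $\Delta(k)$ but supplies up to $1-\frac1e$ of mass, counted only once on the $\hat{V}$ side. Whenever $\epsilon+2\epsilon'<1/C$, as for the main theorem's parameters, the minimizer has $\beta>0$.
\end{itemize}
So you cannot drop $\bar{E}$-edges from the minimization; the lemma hinges on pricing them correctly.
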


The lemma states that when we run \textsc{Greedy} for a short period, i.e., $\epsilon' \approx 0, \rho \approx 1$, then the matching rate (measured by $e^{-2}\cdot c(\rho,\epsilon,\epsilon')$) is close to
\begin{equation*}
    e^{-2}\cdot \left( e+1 + \ln\left( \frac{2e-1}{e-1} \right) \right) \approx \frac{4.667}{e^2} > \frac{1}{e}.
\end{equation*}

Recall (from Section~\ref{section:our-techniques}) that as long as we can show a lower bound strictly larger than $1/e$ for the matching rate in the second phase, we can derive a competitive ratio strictly larger than $1-1/e$.
For continuity of presentation, we defer the proof of the lemma to the end of this section, and derive a lower bound for the competitive ratio first.

% We first use Lemma~\ref{lemma:small_w_large_delta} to show that $\E[w(k)]$ reaches $\epsilon'\cdot n$ quickly.
Using Lemma~\ref{lemma:small_w_large_delta}, we further give Lemma~\ref{lemma:large_expected_g}, which states that $\E[w(k)]$ reaches $\epsilon'\cdot n$ quickly (by measure concentration).
We defer the proof to Appendix~\ref{appendix}.

\begin{lemma} \label{lemma:large_expected_g}
    For any $\rho$, $\epsilon$, and $\epsilon'$ such that $c(\rho,\epsilon,\epsilon') > 0$, we have $\E[w(k^*)] \geq (1-o(1)) \cdot \epsilon'\cdot n$, where $k^* = {\epsilon' \cdot m}/{c(\rho,\epsilon,\epsilon')}$.
\end{lemma}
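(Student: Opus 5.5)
We prove Lemma~\ref{lemma:large_expected_g} by viewing $w(k)$ as a process that, while below the threshold $\epsilon' n$, has expected increments bounded below by $c\cdot n/m$, where $c := c(\rho,\epsilon,\epsilon')$. Let $\tau$ be the first round $k$ with $w(k) > \epsilon' n$ (set $\tau=\infty$ if this never happens). For $k<\tau$ we have $w(k)\le \epsilon' n$, so Lemma~\ref{lemma:small_w_large_delta} gives $\Delta(k)\ge c n$. Conditioned on the history up to round $k$, the increment $w(k+1)-w(k)$ has expectation $\Delta(k)/m$. It is bounded in $[0,e^\rho]$, and it is positive with probability $|F(k)|/m$.

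The first step is to show that the stopped process
\[
Z(k) := w(\min(k,\tau)) - \frac{c\,n}{m}\cdot \min(k,\tau)
\]
is a submartingale. This follows directly from Equation~\eqref{equation:relation-between-greedy-and-free-edges} in its conditional form, $\E_{k+1}[w(k+1)] = w(k) + \Delta(k)/m$. Its increments are bounded by $e^\rho + cn/m$.

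The second step is to bound $\Pr[\tau > k^*]$. On the event $\tau>k^*$ we have $w(k^*)\le \epsilon' n$. Since $k^*=\epsilon' m/c$, the accumulated drift $c n k^*/m$ equals $\epsilon' n$. Hence $\tau>k^*$ forces the martingale part $w(k^*) - \sum_{k<k^*}\E_{k+1}[w(k+1)-w(k)]$ to be at most $-\,(\text{drift} - \epsilon' n)$. The drift matches $\epsilon' n$ exactly, so the bound is too tight unless we run slightly longer. We therefore apply the argument at $k^*_\delta = (1+\delta)k^*$, and then use the monotonicity of $w$ together with concavity. Concretely, it is cleaner to argue directly about the target bound $\E[w(k^*)]\ge(1-o(1))\epsilon' n$, as follows.

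Take expectations of the stopped process to get $\E[w(\min(k^*,\tau))] \ge \frac{cn}{m}\E[\min(k^*,\tau)]$. Since $w$ is nondecreasing, $w(k^*)\ge w(\min(k^*,\tau))$. On $\{\tau\le k^*\}$ we have $w(k^*)\ge \epsilon' n$. On $\{\tau>k^*\}$ the stopped time equals $k^*$. Combining these gives $\E[w(k^*)] \ge \epsilon' n\Pr[\tau\le k^*] + \E[w(k^*)\mathbf 1_{\tau>k^*}]$. We need $\E[w(k^*)\mathbf 1_{\tau>k^*}] \approx \epsilon' n \Pr[\tau>k^*]$. To get this, we use concentration. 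The martingale $w(\min(k,\tau))-\sum_{j<\min(k,\tau)}\Delta(j)/m$ has increments whose conditional variance is at most $e^{2\rho}\cdot|F(j)|/m$. The total variance over $k^*=O(m)$ rounds is $O(\sum_j |F(j)|/m)$, and this is $O(\E[\text{total drift}])$ up to the factor $e^\rho$, i.e.\ $O(n)$. By Freedman's inequality the deviation is then $O(\sqrt{n\log n})=o(n)$ with probability $1-o(1)$. On that good event, if $\tau>k^*$ then $w(k^*)\ge cnk^*/m - o(n) = \epsilon' n - o(n)$. In either case $w(k^*)\ge (1-o(1))\epsilon' n$ with probability $1-o(1)$, and the conclusion follows since $w\ge0$.

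The main obstacle is this concentration step. The increments have size up to $e^\rho$ while each round's expected gain is only $\Theta(n/m)$. Azuma's inequality with worst-case bounds over $k^*=\Theta(m)$ rounds gives deviation $\sqrt{m}$, which may far exceed $n$ when $m\gg n^2$. We therefore must use the variance (Freedman/Bernstein) form: the per-round variance is $O(n/m)$, because a change occurs only with probability $|F(j)|/m$ and $|F(j)|\le|E|$. Bounding $|F(j)|$ by $\Delta(j)$ makes the total variance $O(\E[w])=O(n)$, which is what we need. Some care is also needed to define the variance bound on the stopped process, where $\Delta(j)$ could be much larger than $cn$. There we use that the total increment of $w$ is at most $e^\rho n$ deterministically, so the summed conditional variance is at most $e^{\rho}\cdot$(summed drift), which is bounded in expectation.
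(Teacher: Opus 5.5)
Your proof is correct, but it takes a different route from the paper.

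The paper uses no stopping time. It observes that $w(k^*)<(1-\delta)\epsilon' n$ forces $w(k)<\epsilon' n$ for all $k\le k^*$. It then replaces the increments $z(i)=w(i+1)-w(i)$ by IID two-point variables $\bar z(i)\in\{0,e\}$ of mean $c\,n/m$. Chebyshev with the Bhatia--Davis variance bound gives failure probability $e/(\delta^2\epsilon' n)$, and the paper takes $\delta=n^{-1/3}$. That is shorter, but the comparison $\Pr[\sum_i z(i)<t]\le\Pr[\sum_i \bar z(i)<t]$ is asserted, not proved. It also does not follow from a per-step first-order domination: the $z(i)$ are dependent, take values in $\{0,1,e^\rho\}$, and satisfy the drift bound only while $w$ stays below $\epsilon' n$.

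Your stopped submartingale plus Freedman's inequality handles exactly these issues and gives an exponentially small failure probability. The price is that you must control the random predictable variation $V$, and that step needs tightening:
\begin{enumerate}
\item Freedman needs $V\le\sigma^2$ on the relevant event, not just $\E[V]=O(n)$. One fix is Markov on the summed drift $D$, which loses a slowly growing factor in the deviation but still gives $o(n)$ with probability $1-o(1)$. A cleaner fix: at the first time the martingale $w(\cdot\wedge\tau)-D$ drops to $-a$ or below, one-step changes are at most $e^\rho$, so $D\le \epsilon' n+a+2e^\rho$ there. Hence $V\le e^\rho D=O(n)$ at that time, and the ``there exists $k$'' form of Freedman applies directly.
\item Your remark that the per-round variance is $O(n/m)$ fails in early rounds, since $F(0)=E$ gives $|F(0)|/m=1$. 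Only the summed bound $V\le e^\rho D$ (via $|F(j)|\le\Delta(j)$) is valid, and that is all your argument actually uses.
\end{enumerate}

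Finally, the $(1+\delta)k^*$ detour is unnecessary. On $\{\tau>k^*\}$ the drift is at least $\epsilon' n$, so $w(k^*)\ge \epsilon' n-o(n)$ on the good event, and this loss is already absorbed in the $(1-o(1))$ factor.
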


Using the above lemma, we can establish a lower bound of $0.6342$.
\begin{theorem}
    \textsc{Boosted Suggested Matching} is at least $0.6342$-competitive for online matching with KIID edge arrivals.
\end{theorem}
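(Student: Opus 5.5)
We follow the same template as the perfect-matching case: split on the value of $\LP^*$, and in the main case combine Equation~\eqref{equation:E[m_1]}, Lemma~\ref{lemma:lower-bound-of-E[m2]} and a lower bound on $\E[w((1-\rho)m)]$. This last bound now comes from Lemma~\ref{lemma:large_expected_g} rather than from the ODE of Lemma~\ref{lemma:lower-bound-of-E[g]}.

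\textbf{Case $\LP^* < (1-\epsilon)n$.} We set $\rho=1$. By the Observation following the analysis of \textsc{Suggested Matching} and Lemma~\ref{lemma:LP*>=E[OPT]}, the ratio is at least $\frac{1-1/e}{1-\epsilon}$. We will pick $\epsilon$ so that this is at least $0.6342$; this requires roughly $\epsilon \ge 0.0033$.

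\textbf{Case $\LP^* \ge (1-\epsilon)n$.} We fix $\rho$ close to $1$ and choose $\epsilon'$ so that the round $k^* = \epsilon' m / c(\rho,\epsilon,\epsilon')$ from Lemma~\ref{lemma:large_expected_g} equals $(1-\rho)m$. That is, $\epsilon'$ is the solution of
\[
\epsilon' = (1-\rho)\cdot c(\rho,\epsilon,\epsilon').
\]
Because $c$ is affine and decreasing in $\epsilon'$, this solution is explicit:
\[
\epsilon' = \frac{(1-\rho)\,c_0}{1 + (1-\rho)\cdot 2\left(3e^\rho - 2 + \frac{2e^\rho-1}{e-1}\right)},
\]
where $c_0$ denotes the $\epsilon'$-free part of $c$. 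We must check that $c(\rho,\epsilon,\epsilon')>0$ at this choice, which holds for $\rho$ near $1$ and small $\epsilon$. Lemma~\ref{lemma:large_expected_g} then gives $\E[w((1-\rho)m)] \ge (1-o(1))\epsilon' n$. Alternatively, if $k^*<(1-\rho)m$, monotonicity of $w$ gives the same bound.

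Plugging into the framework bound and using $\E[\OPT]\le n$ (since $\E[\OPT] \le \LP^* \le n$, or directly $\E[\OPT]\le n$), we obtain
\[
\E[\ALG] \ge \left(1-e^{-\rho} + e^{-2\rho}\,\epsilon'\right) n - o(n).
\]
It then suffices to verify numerically that for suitable $(\rho,\epsilon)$, for instance $\epsilon\approx 0.0034$ and $\rho\approx 0.95$ to $0.97$, we have
\[
1-e^{-\rho}+e^{-2\rho}\epsilon' \ge 0.6342
\quad\text{and}\quad
\frac{1-1/e}{1-\epsilon}\ge 0.6342 .
\]

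\textbf{Main obstacle.} The difficulty is purely quantitative. The gain over $1-1/e$ is roughly $(1-\rho)\bigl(e^{-2\rho}c - e^{-\rho}\bigr)$, which is small. Meanwhile $\epsilon$ must be large enough for the first case but small enough not to erode $c$. So the parameters must be tuned jointly, and the $o(1)$ terms from Lemma~\ref{lemma:large_expected_g} must be absorbed by a slight margin in the numerical inequality. The plan is to optimize $\rho$ by a one-dimensional search for each candidate $\epsilon$, then set $\epsilon$ at the balancing point where the two case bounds coincide.
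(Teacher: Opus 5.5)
\paragraph{Assessment.} Your structure is the paper's. You use the same case split at $\epsilon=0.0034$, with $\rho=1$ when $\LP^*<(1-\epsilon)n$. Otherwise you apply Lemma~\ref{lemma:large_expected_g} with $k^*\le(1-\rho)m$ and monotonicity of $\E[w(k)]$, and feed this into $1-e^{-\rho}+e^{-2\rho}\,\E[w((1-\rho)m)]/n$. Choosing $\epsilon'$ to solve $\epsilon'=(1-\rho)\,c(\rho,\epsilon,\epsilon')$ is a harmless refinement of the paper's fixed $\epsilon'=0.068$.

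\paragraph{The parameters you suggest fail.} The numerical instantiation is the entire content of this proof, and your suggested range $\rho\approx 0.95$ to $0.97$ does not work. With $\epsilon=0.0034$:
\begin{itemize}
\item at $\rho=0.97$ your formula gives $\epsilon'\approx 0.0903$ and a bound of about $0.6339$;
\item at $\rho=0.95$ it gives $\epsilon'\approx 0.1224$ and about $0.6316$, which is below $1-1/e$.
\end{itemize}
The cause is that $c$ decreases in $\epsilon'$ with slope about $17$. Once $1-\rho$ exceeds a couple of percent, the guaranteed phase-two rate $e^{-2\rho}c$ falls to about $1/e$, and switching to \textsc{Greedy} gains nothing.

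The bound exceeds $0.6342$ only for $\rho$ in a narrow window, roughly $0.976$ to $0.984$. It peaks near $0.63428$ at $\rho\approx 0.98$, where your choice gives $\epsilon'\approx 0.0681$. The paper takes $\rho=0.98$ and $\epsilon'=0.068$, checks $c>3.4074$ (so $k^*\approx 0.01996\,m\le 0.02\,m$), and obtains $1-e^{-0.98}+e^{-1.96}\cdot 0.068>0.63426$.

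Your one-dimensional search would find these values. As written, however, the proposal points to parameters that fail. With a margin under $10^{-4}$, you must state specific values and verify them.

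\paragraph{Minor point.} Drop the justification $\LP^*\le n$; it is false for non-bipartite type-graphs. Disjoint triangles give $\LP^*=\tfrac{3}{2}(1-e^{-2})\,n\approx 1.30\,n$. The direct bound $\E[\OPT]\le n$ is all you need.
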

\begin{proof}
    Fix $\epsilon = 0.0034$.
    If the type-graph satisfies $\LP^* < (1-\epsilon)\cdot n$, then (by setting $\rho=1$) the competitive ratio is at least
    \begin{equation*}
        \frac{1-1/e}{1-\epsilon} = \frac{1-1/e}{1-0.0034} > 0.63427.
    \end{equation*}
    
    Otherwise, we set $\rho=0.98$ and let $\epsilon' = 0.068$.
    It can be verified that
    \begin{equation*}
        c(\rho,\epsilon,\epsilon') > 3.4074 \quad \text{ and } \quad (1-\rho)\cdot m \geq k^* = \frac{\epsilon'\cdot m}{c(\rho,\epsilon,\epsilon')}.
    \end{equation*}    

    Therefore, we can lower bound the competitive ratio by
    \begin{align*}
         & \ 1-e^{-\rho} + e^{-2\rho} \cdot \frac{\E[w((1-\rho)m)]}{n}
         \geq \ 1-e^{-\rho} + e^{-2\rho} \cdot \frac{\E[w(k^*)]}{n} \\
         = & \ 1-e^{-\rho} + e^{-2\rho} \cdot (1-o(1))\cdot \epsilon' > 0.63426,
    \end{align*}
    where the inequality follows from the monotonicity of $\E[w(k)]$.
\end{proof}

It remains to prove Lemma~\ref{lemma:small_w_large_delta}.
Similar to the analysis in Section~\ref{section:with-perfect-matching}, we fix the optimal solution $\bx$ to the Natural LP, and show that there should be a sufficient number of (free) edges so that their total contribution to $\LP^*$ enables $\LP^* \geq (1-\epsilon)\cdot n$.
In the following, we fix some round $k$ and consider the contributions to $\LP^*$ from the following three parts: 
\begin{itemize}
    \item edges incident to vertices that are already matched;
    \item free edges in $\hat{E}$, i.e., edges in $F(k)\cap \hat{E} = \hat{F}(k)\cup (\bar{F}(k)\cap M^*)$;
    \item free edges in $\bar{E}$, i.e., edges in $F(k)\cap \bar{E} = \bar{F}(k) \setminus M^*$.
\end{itemize}

For convenience, we index the vertices $\hat{V}$ as $\{u_1,u_2,\ldots,u_{2n}\}$.
To characterize the contributions from the free edges in compliance with Constraint~\eqref{constraint}, we give the following definition.

\begin{definition}
    For each vertex $u_i$ where $i\in \{1,\ldots,2n\}$, let $d_i$ and $\delta_i$ be the number of its incident free edges in $\hat{E}$ and $\bar{E}$, respectively.
\end{definition}

Note that if a vertex $u_i$ has already been matched by \textsc{Greedy} (at or before round $k$), then we have $d_i = \delta_i = 0$.
According to the above definition, we have
\begin{equation*}
    |F(k)\cap \hat{E}| = \frac{1}{2} \sum_{i=1}^{2n} d_i \quad \text{and} \quad |F(k)\cap \bar{E}| = \sum_{i=1}^{2n} \delta_i.
\end{equation*}

\begin{lemma} \label{lemma:contribution-from-free-edges}
    The contribution to $\LP^*$ from all free edges at the end of round $k$ is at most
    \begin{equation*}
        n \left( 1-\exp\left(-\frac{2|F(k)\cap \hat{E}|+|F(k)\cap \bar{E}|}{2n}\right) \right) + \frac{(1-1/e)\cdot |F(k)\cap \bar{E}|}{2}.
    \end{equation*}
\end{lemma}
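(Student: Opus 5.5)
We bound the contribution of free edges vertex by vertex on the $\hat{V}$ side. Every free edge has at least one endpoint in $\hat{V}$, since there is no edge between two vertices of $\bar{V}$. A free edge in $\hat{E}$ has both endpoints in $\hat{V}$, while a free edge in $\bar{E}$ has exactly one. We therefore write
\begin{equation*}
\sum_{e\in F(k)} x_e = \frac{1}{2}\sum_{i=1}^{2n} \sum_{e\in E(u_i)\cap F(k)\cap\hat{E}} x_e + \sum_{i=1}^{2n}\sum_{e\in E(u_i)\cap F(k)\cap \bar{E}} x_e .
\end{equation*}
The first term counts each $\hat{E}$ free edge once from each of its two endpoints, and the second counts each $\bar{E}$ free edge once, from its unique endpoint in $\hat{V}$.

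The key step is to split the $\bar{E}$ contribution into two halves:
\begin{equation*}
\sum_{e\in F(k)} x_e = \frac{1}{2}\sum_{i=1}^{2n} \Bigl( \sum_{e\in E(u_i)\cap F(k)\cap\hat{E}} x_e + \sum_{e\in E(u_i)\cap F(k)\cap\bar{E}} x_e \Bigr) + \frac{1}{2}\sum_{i}\sum_{e\in E(u_i)\cap F(k)\cap\bar{E}} x_e .
\end{equation*}
We bound the two parts using Constraint~\eqref{constraint} at different vertices.

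\textbf{First part.} For each $u_i$, the bracketed sum ranges over a subset $S\subseteq E(u_i)$ of size $d_i+\delta_i$. Constraint~\eqref{constraint} bounds it by $1-e^{-(d_i+\delta_i)}$. Claim~\ref{claim:sum_of_1-e^d} with $\ell=2n$ then gives
\begin{equation*}
\frac{1}{2}\sum_{i=1}^{2n}\bigl(1-e^{-(d_i+\delta_i)}\bigr)\le n\Bigl(1-\exp\Bigl(-\frac{\sum_i d_i+\sum_i\delta_i}{2n}\Bigr)\Bigr).
\end{equation*}
Substituting $\sum_i d_i = 2|F(k)\cap\hat{E}|$ and $\sum_i\delta_i = |F(k)\cap\bar{E}|$ yields exactly the first term of the lemma.

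\textbf{Second part.} Here we group the $\bar{E}$ free edges by their endpoint in $\bar{V}$ rather than in $\hat{V}$. For each $v\in\bar{V}$, let $s_v$ be the number of free $\bar{E}$ edges incident to $v$, so $\sum_v s_v = |F(k)\cap\bar{E}|$. Constraint~\eqref{constraint} at $v$ bounds their total $x$-value by $1-e^{-s_v}$. Since $1-e^{-s}$ is concave in $s$ with value $0$ at $s=0$, and $s_v$ is a nonnegative integer, we have $1-e^{-s_v}\le (1-1/e)\,s_v$: this is an equality at $s_v\in\{0,1\}$ and holds for all larger integers by concavity. Summing over $v$ and multiplying by $\frac12$ gives $\frac{(1-1/e)|F(k)\cap\bar{E}|}{2}$, the second term of the lemma.

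The main obstacle is spotting this split: half of each $\bar{E}$ edge is charged at its $\hat{V}$ endpoint (where it shares a concave budget with the $\hat{E}$ free edges), and the other half at its $\bar{V}$ endpoint (where the linear bound $(1-1/e)s_v$ is available). Once the split is chosen, everything else is a direct application of the LP constraint and Claim~\ref{claim:sum_of_1-e^d}. The only check needed is that the per-vertex edge sets used in each application are genuine subsets of $E(u)$, which holds because parallel edges are counted as distinct.
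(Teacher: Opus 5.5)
Your proof is correct and matches the paper's argument. The paper uses the same split: the $\hat{E}$ free edges plus half of each free $\bar{E}$ edge are charged to the concave budget $1-e^{-(d_i+\delta_i)}$ at the $\hat{V}$ endpoints, which Claim~\ref{claim:sum_of_1-e^d} then bounds, and the remaining $\frac{1}{2}\sum_{e\in F(k)\cap\bar{E}} x_e$ is bounded separately. The only difference is in that leftover term: the paper applies the singleton constraint $x_e\le 1-1/e$ edge by edge, so your regrouping by $\bar{V}$ endpoints with $1-e^{-s_v}\le(1-1/e)s_v$ is valid but unnecessary.
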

\begin{proof}
    Let $\{x_e\}_{e\in E}$ be the optimal solution to the Natural LP.
    We first consider the contribution of the free edges to $\LP^*$.
    For edges in $F(k)\cap \hat{E}$, we have
    \begin{align*}
        \sum_{e\in F(k)\cap \hat{E}} x_e 
        & = \frac{1}{2}\cdot \sum_{u\in \hat{V}} \sum_{e\in E(u)\cap F(k)\cap \hat{E}} x_e 
        = \frac{1}{2}\cdot \left( \sum_{u\in \hat{V}} \sum_{e\in E(u)\cap F(k)} x_e - \sum_{e\in F(k)\cap \bar{E}} x_e \right) \\
        & \leq \frac{1}{2}\cdot \left( \sum_{i=1}^{2n} (1-\exp(-d_i-\delta_i)) - \sum_{e\in F(k)\cap \bar{E}} x_e \right),
    \end{align*}
    where the first equality holds since each edge in $F(k)\cap \hat{E}$ is incident to two vertices in $\hat{V}$ and the inequality holds due to the feasibility of $\bx$ to the LP.
    Taking the contribution of the free edges in $F(k)\cap \bar{E}$ into account, the contribution from all free edges is at most
    \begin{align*}
        & \ \frac{1}{2}\cdot \left( \sum_{i=1}^{2n} (1-\exp(-d_i-\delta_i)) - \sum_{e\in F(k)\cap \bar{E}} x_e \right) + \sum_{e\in F(k)\cap \bar{E}} x_e \\
        = & \ \frac{1}{2}\cdot \left( \sum_{i=1}^{2n} (1-\exp(-d_i-\delta_i)) + \sum_{e\in F(k)\cap \bar{E}} x_e \right) \\
        \leq & \ n\cdot \left( 1-\exp\left(-\frac{2|F(k)\cap \hat{E}| + |F(k)\cap \bar{E}|}{2n}\right) + \left( 1-\frac{1}{e} \right)\cdot |F(k)\cap \bar{E}| \right),
    \end{align*}
    where the inequality holds due to Claim~\ref{claim:sum_of_1-e^d} and $x_e \leq 1-1/e$ for all edge $e$.    
\end{proof}

By feasibility of $\bx$, each matched vertex has a contribution of at most $1$ to $\LP^*$.
Note that there are $2g(k)$ matched vertices at the end of round $k$.
Together with Lemma~\ref{lemma:contribution-from-free-edges}, we can upper bound their total contribution to $\LP^*$ by
\begin{equation} \label{eq:constraint-of-growingspeed}
    2 g(k) + n \cdot \left( 1-\exp\left(-\frac{2|F(k)\cap \hat{E}| + |F(k)\cap \bar{E}|}{2n}\right) \right) + \frac{(1-1/e)}{2}\cdot |F(k)\cap \bar{E}|.
\end{equation}

Now we are ready to prove Lemma~\ref{lemma:small_w_large_delta}.

\begin{proofof}{Lemma~\ref{lemma:small_w_large_delta}}
    For ease of notation, we let
    \begin{equation*}
        \alpha = \frac{2|F(k)\cap \hat{E}| + |F(k)\cap \bar{E}|}{2n}, \quad \text{ and } \quad
        \beta = \frac{|F(k)\cap \bar{E}|}{n}.
    \end{equation*}

    Recall that $\LP^* \geq (1-\epsilon)\cdot n$.
    By Equation~\eqref{eq:constraint-of-growingspeed} we have
    \begin{align}
        (1-\epsilon)\cdot n & \leq 2 g(k) + n \cdot \left( 1-e^{-\alpha} \right) + \frac{(1-1/e)}{2}\cdot \beta\cdot n \nonumber \\
        & \leq 2\cdot \epsilon'\cdot n + n \cdot \left( 1-e^{-\alpha} \right) + \frac{(1-1/e)}{2}\cdot \beta\cdot n, \label{equation:total_contribution>1-eps}
    \end{align}
    where the second inequality holds due to $g(k) \leq w(k)$ and the assumption of the lemma that $w(k)\leq \epsilon'\cdot n$.
    Reordering the inequality, we obtain
    \begin{equation}
        \beta \geq \frac{e^{-\alpha}-(\epsilon+2\epsilon')}{\frac{1}{2}\cdot (1-\frac{1}{e})}. \label{equation:lower_bound_beta_by_alpha}
    \end{equation}

    On the other hand, we can rewrite $\Delta(k)$ as
    \begin{align*}
        \Delta(k) & = |\hat{F}(k)| + e^{\rho}\cdot |\bar{F}(k)| \\
        & = |F(k)\cap \hat{E}| - |F(k)\cap M^*| + e^{\rho}\cdot \left( |F(k)\cap \bar{E}| + |F(k)\cap M^*| \right) \\
        & = \left(\alpha + \left( e^{\rho}-\frac{1}{2} \right)\cdot \beta \right)\cdot n + \left( e^\rho - 1 \right)\cdot |F(k)\cap M^*| \\
        & \geq \left(\alpha + \left( e^{\rho}-\frac{1}{2} \right)\cdot \beta + (e^{\rho}-1)\cdot (1-2\epsilon')\right)\cdot n,
    \end{align*}
    where the inequality holds since the number of matched vertices is $2g(k)\leq 2w(k)\leq 2\epsilon'\cdot n$ at the end of round $k$.
    Plugging in the lower bound for $\beta$ (Equation~\eqref{equation:lower_bound_beta_by_alpha}), we obtain
    \begin{align*}
        \frac{\Delta(k)}{n} & \geq \alpha + \frac{e^{\rho}-\frac{1}{2}}{\frac{1}{2}\cdot (1-\frac{1}{e})}\cdot \Big( e^{-\alpha}-(\epsilon+2\epsilon') \Big) +(e^{\rho}-1)\cdot (1-2\epsilon')\\
        & = \alpha + \frac{e(2e^{\rho}-1)}{e-1}\cdot e^{-\alpha} - \frac{e(2e^{\rho}-1)}{e-1}\cdot (\epsilon + 2\epsilon') + (e^{\rho}-1)\cdot (1-2\epsilon').
    \end{align*}

    Finally, we observe that $\alpha + \frac{e(2e^{\rho}-1)}{e-1}\cdot e^{-\alpha}$ is convex in $(0,\infty)$ and achieves its minimum at $\alpha = \ln\left( \frac{e(2e^{\rho}-1)}{e-1} \right)$.
    Thus we have
    \begin{align*}
        \Delta(k) & \geq 1 + \ln\left( \frac{e(2e^{\rho}-1)}{e-1} \right) - \frac{e(2e^{\rho}-1)}{e-1}\cdot (\epsilon + 2\epsilon')+(e^{\rho}-1)\cdot (1-2\epsilon') \\
        & = e^\rho + 1+\ln\left( \frac{2e^{\rho}-1}{e-1} \right) - \frac{e(2e^{\rho}-1)}{e-1}\cdot \epsilon - \left( 3e^\rho-2+\frac{2e^\rho-1}{e-1} \right) \cdot 2\epsilon',
    \end{align*}
    which finishes the proof.
\end{proofof}

\section{Hardness Results}
\label{section:hardness}

In this section, we present an upper bound of $0.845$ for all algorithms for online matching with KIID edge arrivals, where in the hard instance, the type-graph $G$ is a complete graph with $4$ vertices.
% We will demonstrate that the optimal online algorithm has a competitive ratio of $0.8450$ on $G$.
% One can derive another type-graph with the same upper bound of $0.8450$ while the number of edges is sufficiently large, by copying $G$ many times.

\begin{theorem} \label{theorem:hardness}
    No online algorithm can be better than $0.845$-competitive for online matching with KIID edge arrivals.
\end{theorem}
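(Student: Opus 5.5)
The plan is to take the type-graph $G$ to be the complete graph $K_4$ on vertices $\{1,2,3,4\}$ with every edge type of arrival rate $1$, so $m=6$. The proof then has two parts: characterize the optimal online algorithm exactly, and compute $\E[\ALG]$ and $\E[\OPT]$ in closed form. If the statement needs instances with large $n$, I would take many disjoint copies of $K_4$ with the same per-copy arrival rates. Each copy then receives its own edge arrivals, so for large $m$ the ratio is preserved up to $o(1)$. The exact arrival rate ($1$, or a common rate $\lambda$ with a Poisson limit) is a parameter I would tune to minimize the ratio; I expect the constant $0.845$ to come out of this choice.

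The structural observation is that $K_4$ has exactly three perfect matchings, and every edge $e$ has a unique complementary edge $\bar e$ disjoint from it. The state of any online algorithm is therefore either ``nothing matched'' or ``edge $e$ matched''. In the second state the only further gain is $\bar e$, and the algorithm should accept $\bar e$ whenever it arrives. Let $p(k)=1-(1-1/m)^{m-k}$ be the probability that a fixed edge arrives in the remaining rounds after round $k$. I will argue that at state ``nothing matched'' it is optimal to accept the first arriving edge. Accepting at round $k$ yields $1+p(k)$. Rejecting yields at most $\max_{k'>k}(1+p(k'))\le 1+p(k)$, because by symmetry of $K_4$ every future acceptance gives the same conditional value and $p$ is non-increasing. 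The past cannot help, since rejected edges are gone. A short backward-induction or dynamic-programming argument makes this rigorous, including for randomized algorithms, since their value is a convex combination of these deterministic values. Hence \textsc{Greedy} is optimal on this instance.

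For the computation, let $T$ be the round of the first arrival. Then $\E[\ALG]=\sum_{k}\Pr[T=k]\,(1+p(k))$, which is simply $\Pr[\text{some edge arrives}]$ plus the probability that the complement of the first arrived edge arrives later. On the other side, $\OPT=2$ iff some complementary pair $\{e,\bar e\}$ both appear, $\OPT=1$ iff at least one edge appears but no complementary pair does, and $\OPT=0$ otherwise. I would compute $\Pr[\OPT=2]$ by inclusion--exclusion over the three perfect matchings. The events for distinct perfect matchings involve disjoint edge pairs, so the intersections are products of ``specified edges all appear'' probabilities, computable from the multinomial, or from independent Poisson counts in the large-rate limit. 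The ratio $\E[\ALG]/\E[\OPT]$ then evaluates numerically to at most $0.845$.

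The main obstacle is the optimality step: I must be sure that no cleverer online strategy beats \textsc{Greedy} on $K_4$. This includes strategies that delay accepting in the hope of catching both edges of a perfect matching. The monotonicity argument above handles this in the symmetric instance. If it turns out that the tuned instance has non-uniform rates, I would instead solve the small finite-horizon MDP, whose states are time and matched edge, by explicit backward induction.
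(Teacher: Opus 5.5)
Your core argument is the paper's: the single $K_4$ with unit rates and exactly $m=6$ rounds, the observation that accepting the first arrival is optimal, and $\E[\ALG]=1+(1-(5/6)^5)<1.598123$, since an edge arrives in round $1$ with certainty. Your monotonicity argument $1+p(k')\le 1+p(k+1)<1+p(k)$ is a cleaner substitute for the paper's ``evaluate the DP and check.'' Computing $\Pr[\OPT=2]$ by inclusion--exclusion over the three perfect matchings is a valid alternative to the paper's count of $\Pr[\OPT=1]$ via stars and triangles. However, two of your side remarks are wrong, and following either one would lose the constant $0.845$.

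First, with $m=6$ the edge counts are multinomial, not independent, so the intersection terms are not products, as your sentence suggests. Two perfect matchings both appearing is the event that four specified edges all appear, with probability $5880/46656$. All three appearing is the event that all six edges appear, with probability $720/46656$. A single pair appears with probability $19502/46656$. Together these give $\Pr[\OPT=2]=41586/46656=1-845/7776$, hence $\E[\OPT]>1.891332$ and a ratio below $0.84497$. Using products instead gives $\E[\OPT]\approx 1.803$ and a ratio of about $0.886$.

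Second, disjoint copies of $K_4$ (or a common-rate Poisson limit) do not preserve the ratio. Each copy then receives a Binomial$(m,6/m)\to$Poisson$(6)$ number of arrivals rather than exactly six. Per copy, \textsc{Greedy} then gets $2(1-e^{-6})-\frac65 e^{-1}(1-e^{-5})\approx 1.557$, while $\E[\OPT]\approx (1-e^{-6})+1-(1-(1-e^{-1})^2)^3\approx 1.781$, a ratio of about $0.874$. The bound $0.845$ is specific to the single six-round instance. The theorem only needs one instance, as in the paper, so drop the copying and rate-tuning remarks.
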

\begin{proof}
    We first characterize the behavior of the optimal online algorithm for the stated hard instance (which can be straightforwardly extended to analyze other hard instances).
    
    Let $\texttt{DP}[M,k]$ be the expected number of edges matched (excluding $M$) by the optimal online algorithm conditioned on edges in $M$ being selected already and there are $k$ rounds left.
    We have $\E[\ALG] \leq \texttt{DP}[\varnothing,6]$ for any online algorithms.
    Notice that since the instance has only four vertices, $|M|$ can only be $0,1$ or $2$.
    When $|M| = 0$, the optimal online algorithm is free to match any arrived edges.
    When $|M| = 1$, the algorithm has to wait for the edge type with no common endpoints with that in $M$.
    When $|M| = 2$, the algorithm can not select any edge anymore.
    Since each edge type $e\in E$ realizes with probability $1/6$ at each round, we have
    \begin{align*}
        \texttt{DP}[M,k] = 
        \begin{cases}
            \frac{1}{6}\cdot \sum_{e\in E} \max\{1+\texttt{DP}[\{e\}, k-1], \texttt{DP}[\varnothing, k-1]\}, & \text{ if } |M| = 0 \text{ and } k\geq 1; \\
            \frac{1}{6} + \frac{5}{6}\cdot \texttt{DP}[M, k-1], & \text{ if } |M| = 1 \text{ and } k\geq 1; \\
            0, & \text{ if } |M| = 2 \text{ or } k = 0.
        \end{cases}
    \end{align*}
    
    By computing $\texttt{DP}[M,k]$ for all $M$ and $k$, one can verify that the optimal online algorithm always matches the first arrived edge $e_1$.
    Therefore, we can upper bound $\E[\ALG]$ as
    \begin{equation*}
        1 + \texttt{DP}[\{e_1\},5] = 1 + 1\cdot \left( 1-(5/6)^5 \right) < 1.598123.
    \end{equation*}

    One can verify that $\E[\OPT] > 1.891332$ (detailed calculations can be found in Appendix~\ref{appendix}).
    Hence, the competitive ratio of any online algorithm for the problem is upper bounded by
    \begin{equation*}
        \E[\ALG]/\E[\OPT] \leq 1.598122/1.891332 < 0.84497.
    \end{equation*}
\end{proof}

\section{Conclusion}

In this work, we initiate the study of online stochastic matching under known independent and identically distributed (KIID) edge arrivals. We propose \textsc{Boosted Suggested Matching}, which applies \textsc{Suggested Matching} in the first phase (guided by the maximum matching of the type-graph) and \textsc{Greedy} in the second phase.
Using the powerful Natural LP as the main tool for lower bounding the matching rate of the second phase, we show that, under the assumption of integral arrival rates, this algorithm achieves a competitive ratio strictly better than $1 - 1/e$.

As the first work on online matching with KIID edge arrivals, we leave many questions open.
One of the most natural open questions is whether even higher competitive ratios than ours can be obtained under the assumption of integral rates.
We believe it is possible to derive a better lower bound on the competitive ratio of \textsc{Boosted Suggested Matching}, given a tighter analysis of the matching rate of \textsc{Greedy} in phase 2, particularly by better handling the case without perfect matchings.
It would also be interesting to explore other natural algorithms, such as the $b$-\textsc{Suggested Matching} (for example, replacing $M^*$ with a maximum $b$-matching) and investigate whether they achieve a strictly better competitive ratio.
Furthermore, note that \textsc{Suggested Matching} achieves a ratio of $1 - 1/e$ even when the graph is edge-weighted: we only need to set $M^*$ as the maximum weighted matching.
This raises a natural question: whether competitive ratios surpassing $1 - 1/e$ are possible in the edge-weighted or vertex-weighted models?
Finally, we believe it would be an interesting direction to explore the problem without the integral arrival rates assumption.
Is it possible to obtain a ratio significantly better than $0.5$ in this more general setting?

% Bibliography
\bibliographystyle{ACM-Reference-Format}
\bibliography{ref}

% Appendix
\appendix

\section{Difficulties without Integral Arrival Rates}
\label{appendix-discussion}

In this section, we briefly discuss the challenges in obtaining good competitive ratios for the KIID edge arrival model without the assumption of integral arrival rates.

Notice that the major difficulty for edge arrival models (compared with the one-sided vertex arrival model) lies in resolving the contentions among edges: for any edge $e=(i,j)$, one needs to carefully bound the probability that both $i$ and $j$ are unmatched.

Recall that with integral arrival rates, such difficulty can be (partially) circumvented by rounding the indicator solution defined by $M^*$ (as there is no contention between edges in $M^*$).
However, it does not work without integral arrival rates.
Through a similar argument as in Section~\ref{section:suggested-matching}, one would have $\Pr[e \text{ gets matched}] \geq 1-e^{-k}$ for \textsc{Suggested Matching} (when fed with the indicator solution), where $e\in M^*$ and $k$ is the lower bound on the arrival rates of edges in $M^*$.
If $k\geq 1$, $1-1/e$ is still achievable.
Nevertheless, the guarantee is meaningless if we do not have such a lower bound, which forces us to consider other fractional solutions where contentions are inevitable. 

To further demonstrate the challenges in contention resolution, we provide a $(1-e^{-2})/2 \approx 0.432$ per edge guarantee for \textsc{Suggested Matching} under KIID edge arrivals.
Consider a general fractional solution $\bx$.
Recall that \textsc{Suggested Matching} proposes each arriving edge of type $e$ with probability $x_e/\text{arrive rate of } e$.
For any edge $e=(i,j)\in E$, we have
\begin{align*}
    \frac{\Pr[e \text{ gets matched}]}{x_e} &= \frac{\sum_{t=1}^m \left(\Pr[e \text{ arrives (and is proposed) at } t] \cdot \Pr[i,j \text{ are unmatched at } t]\right) }{x_e} \\
    &\geq \sum_{t=1}^m \left( \frac{\frac{x_e}{m} \cdot \prod_{l=1}^{t-1} \left(1-\frac{\heartsuit }{m}\right)}{x_e} \right) = \frac{1}{m} \cdot \sum_{t=1}^m \left(1-\frac{\heartsuit}{m}\right)^{t-1} = \frac{1}{2} \cdot \left( 1-\left(1-\frac{\heartsuit}{m}\right)^{m} \right) \\
    &\geq \frac{1}{2} \cdot \left( 1-\left(1-\frac{2}{m}\right)^{m} \right) \approx \frac{1-e^{-2}}{2} \approx 0.432,
\end{align*}
where $\heartsuit\in [0,2)$ is the sum of $\bx$ value of all incident edges of $i$ and $j$ excluding $e$.
Notice that for a general $\bx$, $\heartsuit$ may be up to $2$ (both $i$ and $j$ are fully matched in $\bx$ without the contribution of $e$), such that the second inequality is tight.
This per-edge analysis does not imply a competitive ratio (even) higher than $0.5$, indicating that obtaining a good competitive ratio for KIID edge arrivals without integral arrival rates is highly non-trivial.

\section{Missing Proofs and Calculations} \label{appendix}

\begin{proofof}{Lemma~\ref{lemma:LP*>=E[OPT]}}
    Let $y_{e}\geq 0$ be the probability that edge $e$ belongs to the maximum matching of the realized graph.
    It suffices to prove that $\{y_{e}\}_{e\in E}$ is a feasible solution to the Natural LP.
    Notice that for any $u\in V$ and set $S \subseteq E(u)$, at most one of the edges could be included in the maximum matching of the realized graph.
    The probability that some edge in $S$ is in the maximum matching is upper bounded by the probability that at least one of these edges arrives.
    Since the online edges sample their types independently, we have
    \begin{equation*}
        \sum_{e\in S} y_{e} \leq 1-\left( 1-\frac{|S|}{m} \right)^m = 1-e^{-|S|},
    \end{equation*}
    where the equality holds since $m$ is sufficiently large.
    Therefore, $\{y_{e}\}_{e\in E}$ is feasible for the Natural LP, which completes the proof.
\end{proofof}

\begin{proofof}{Lemma~\ref{lemma:lower-bound-of-E[m2]}}
    We first define an auxiliary matching $\mathcal{M}(K,X,Y)$ as follows.
    Let $K$ be a sequence of online edges, $X$ and $Y$ be two sets of vertices.
    Let $\mathcal{M}(K,X,Y)$ be the matching produced by
    \begin{itemize}
        \item first running \textsc{Greedy} on edges $K$, where the vertices in $X$ are labeled as matched when the algorithm starts; and then 
        \item deleting all accepted edges with at least one endpoint in $Y$.
    \end{itemize}
    
    For example, if $K$ denotes the online edges arriving in the second phase, then $\mathcal{M}(K,\varnothing,\varnothing)$ is the matching produced by the unconstrained \textsc{Greedy}, and $\mathcal{M}(K,V(M_1),\varnothing)$ is the matching produced by the constrained \textsc{Greedy}, where $V(M_1)$ is the set of vertices matched in $M_1$.
    The following lemma implies a lower bound on $|\mathcal{M}(K,V(M_1),\varnothing)|$ (i.e., $m_2$).

    \begin{lemma} \label{lemma:M(X,0)>M(0,X)}
        For all sequence of edges $K$ and set of vertices $X\subseteq V$ we have
        \begin{equation*}
            |\mathcal{M}(K,X,\varnothing)| \geq |\mathcal{M}(K,\varnothing,X)|.
        \end{equation*}
    \end{lemma}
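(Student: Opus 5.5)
We compare two runs of \textsc{Greedy} on the same sequence $K=(e_1,\dots,e_T)$. Run A starts with the vertices of $X$ marked as matched. Run B starts with nothing marked; afterwards we delete from its output every accepted edge that touches $X$. The plan is an induction on the length of $K$, carried out by tracking the two runs in parallel.

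Let $A_t$ be the matching built by run A after $t$ edges, and $B_t$ the matching built by the unconstrained run B after $t$ edges. Let $B_t^{X}\subseteq B_t$ be the edges of $B_t$ with no endpoint in $X$, so that $|\mathcal{M}(K,\varnothing,X)| = |B_T^{X}|$. For a set of edges $M$, write $V(M)$ for its vertex set. I would prove the stronger invariant
\[
|A_t| \geq |B_t^{X}| \qquad\text{and}\qquad \text{every vertex of } V(B^X_t)\setminus V(A_t) \text{ can be charged injectively},
\]
via a potential argument. The cleanest form is the classical fact that \textsc{Greedy} is ``monotone up to one per deleted vertex''. Instead of the raw invariant, I would induct on $|X|$. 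It suffices to prove the claim for $X=\{x\}$ a single vertex. The general case then follows by removing the vertices of $X$ one at a time, provided we also have a comparison showing that pre-marking more vertices can only shrink the count by at most the number of edges deleted. Both statements reduce to the following one-vertex lemma.

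\emph{One-vertex lemma.} For any $Y$, $K$ and $x\notin Y$,
\[
|\mathcal{M}(K,Y\cup\{x\},\varnothing)| \geq |\mathcal{M}(K,Y,\varnothing)| - \mathbf{1}[x \text{ is matched in } \mathcal{M}(K,Y,\varnothing)].
\]
Call the run without $x$ pre-marked $R$, and the run with $x$ pre-marked $R'$. Proof idea: the symmetric difference of the outputs of $R$ and $R'$ forms a single alternating path starting at $x$. This is the standard ``greedy perturbation'' argument: the two runs agree until the first edge $e$ at $x$ accepted by $R$. From then on, at every time the sets of matched vertices of $R$ and $R'$ differ by exactly one vertex, which propagates along the path (at most one vertex free in $R'$ but matched in $R$, plus $x$ itself). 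By induction over time, $|R|-|R'|\le 1$, and $|R|-|R'|=0$ if $x$ is never matched in $R$.

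With the lemma in hand, add the vertices of $X=\{x_1,\dots,x_s\}$ one by one, with $Y_j=\{x_1,\dots,x_j\}$. Each step loses at most $\mathbf{1}[x_{j+1}$ matched in $\mathcal{M}(K,Y_j,\varnothing)]$. The missing piece is to bound the total loss by the number of edges of $\mathcal{M}(K,\varnothing,\varnothing)$ that touch $X$. This does not follow termwise: $x_{j+1}$ may become matched in the $Y_j$-run even though it was unmatched originally. So I would strengthen the induction to a direct invariant instead. After each step $t$, the set $D_t = V(A_t)\,\triangle\,(V(B_t)\setminus\ldots)$ is controlled by requiring
\[
|B_t| - |A_t| \le \#\{\text{edges of } B_t \text{ touching } X\},
\]
together with the structural statement that every vertex matched in $A_t$ but free in $B_t$ can be assigned to a distinct edge of $B_t$ touching $X$. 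Using the one-vertex path structure, each edge arrival changes these quantities consistently, and the case analysis is over whether $e_{t+1}$ is accepted by A only, by B only, or by both.

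The main obstacle is exactly this bookkeeping in the multi-vertex case. The alternating paths emanating from different vertices of $X$ can interact, and I must make sure each path is charged to a distinct deleted edge of $B$. At the end, $|\mathcal{M}(K,\varnothing,X)| = |B_T| - \#\{\text{edges of } B_T \text{ touching } X\} \le |A_T|$, which is the claim.
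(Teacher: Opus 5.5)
\textbf{There is a genuine gap.} Your proposal leaves open the multi-vertex case, which is the whole content of the lemma.

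The one-vertex lemma is correct, but, as you note yourself, chaining it does not give the result. Already for $K=(x_1y,\,x_2y)$ and $X=\{x_1,x_2\}$ the chained bound gives $|\mathcal{M}(K,X,\varnothing)|\ge 1-1-1=-1$, while the target is $|\mathcal{M}(K,\varnothing,X)|=0$. The proviso you need for the chaining ("pre-marking more vertices can only shrink the count by at most the number of edges deleted") is the lemma itself.

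Your fallback invariant $|B_t|-|A_t|\le \#\{\text{edges of } B_t \text{ touching } X\}$ is the desired conclusion restated for prefixes of $K$. It is not inductive on its own. An arrival not touching $X$ that is accepted by $B$ but rejected by $A$ raises the left side without raising the right side. So you need extra structural information that rules this out when the inequality is tight. That structural part is never stated precisely: $D_t$ is defined with an ellipsis, and the case analysis is only asserted to work. You also explicitly flag the charging of interacting alternating paths to distinct deleted edges as unresolved.

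\textbf{The missing idea.} You need no induction on $|X|$ or on time. Compare the two final matchings and look at the earliest-arriving edge on each alternating path of their symmetric difference. The paper does this for $M=\mathcal{M}(K,X,\varnothing)$ and $M'=\mathcal{M}(K,\varnothing,X)$:
\begin{itemize}
\item Every interior vertex of such a path is matched in both matchings, so it lies outside $X$. Its unique partner edge in each of the two \textsc{Greedy} runs is a path edge. Hence, when the earliest path edge $e$ arrives, all interior vertices are still free in both runs.
\item If both endpoints of $e$ were interior, both runs would accept $e$ and it would survive the deletion, a contradiction. So $e$ is incident to a path endpoint $u$.
\item Suppose $e\in M'$. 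Then $u\notin X$ and $u$ is unmatched in $M$. The other endpoint of $e$ is either interior or, for a one-edge path, also unmatched in $M$ and outside $X$. So the constrained run would have accepted $e$, again a contradiction.
\end{itemize}
Hence $e\in M$, and every path has at least as many $M$-edges as $M'$-edges.

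The same argument applied to $\mathcal{M}(K,X,\varnothing)\,\triangle\,\mathcal{M}(K,\varnothing,\varnothing)$ shows that the earliest edge of every path is an edge of the unconstrained run touching $X$. Since paths are edge-disjoint, this gives exactly the injective charging of paths to deleted edges you were looking for.
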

    \begin{proof}
    For convenience, let $M = \mathcal{M}(K,X,\varnothing)$ and $M' = \mathcal{M}(K,\varnothing,X)$.
    Intuitively, we have $|M|\geq |M'|$ because in $M$ we label vertices in $X$ ``unavailable'' at the \emph{beginning} of the algorithm, while in $M'$ we do the labeling at the \emph{end}: this might cause more vertices to stay unmatched in $M'$.
        
    Formally, we consider the symmetric difference between $M$ and $M'$, which is a collection of alternating cycles and paths.
    To show that $|M| \geq |M'|$, it suffices to argue that the number of edges from $M$ is at least that from $M'$ in every alternating path, since the two matchings have the same number of edges in every alternating cycle.
    Fix any alternating path, where $u$ and $v$ are the two endpoints of the path. 
    Note that $u$ and $v$ are matched in only one of $M$ and $M'$, while all other vertices are matched in both matchings.
    
    We show that the first edge $e$ being accepted in the path must be incident to $u$ or $v$. 
    Suppose otherwise, then when $e$ arrives, both its endpoints are unmatched and will be accepted in both $M$ and $M'$ when it arrives.
    Moreover, the endpoints of $e$ are not contained in $X$ and will not be removed from $M'$.
    Therefore, $e$ appears in both $M$ and $M'$, which contradicts $e$ appearing in the symmetric difference.
    Suppose edge $e$ is incident to $u$.
    We show that $e$ is matched in $M$, which implies that in the alternating path, the number of edges from $M$ is at least that from $M'$.
    Suppose $e$ is not matched in $M$, then $u$ is unmatched in $M$.
    Then, when $e$ arrives, both of its endpoints are unmatched in $M$ and should be accepted, which is a contradiction.
    Note that while $u$ is unmatched in $M'$, the above argument implies that $u$ was matched in $M'$ when $e$ arrives, but eventually becomes unmatched since it was matched to a vertex from $X$ in $\mathcal{M}(K,\varnothing,\varnothing)$.
    \end{proof}
    
    Given the above lemma, we can lower bound $\E[m_2]$ by $\E[|\mathcal{M}(K,\varnothing,V(M_1))|]$, where $K$ denotes the online edges arriving in the second phase.
    Note that the expectation is taken over both $K$ and $M_1$.
    Also, notice that the expected number of edges in $\mathcal{M}(K,\varnothing,\varnothing)$ is $\E[g((1-\rho)m)]$.
    It remains to calculate the probability $q(e)$ of an edge $e \in \mathcal{M}(K,\varnothing,\varnothing)$ being included in $\mathcal{M}(K,\varnothing,V(M_1))$.
    
    Fix any $e\in \mathcal{M}(K,\varnothing,\varnothing)$ and let $u$ and $v$ be its two endpoints.
    Since there is no edge between two endpoints in $\bar{V}$, at least one of $u$ and $v$ is matched in $M^*$.
    Note that $q(e)$ is exactly the probability that $u$ and $v$ are unmatched in $M_1$.
    
    \begin{itemize}
        \item If there exists an edge $e^*$ (which can be $e$) between $u$ and $v$ in $M^*$, then we have
        \begin{equation*}
            \textstyle q(e) = \Pr[e^* \text{ never arrives in phase 1}] = \left( 1-\frac{1}{m} \right)^{\rho\cdot m} = 
            e^{-\rho}.
        \end{equation*}

        \item Otherwise, if both $u$ and $v$ are matched in $M^*$ (say, by $e_u \in M^*$ and $e_v\in M^*$), then we have
        \begin{equation*}
            \textstyle q(e) = \Pr[\text{both } e_u \text{ and } e_v \text{ never arrive in phase 1}] = \left( 1-\frac{2}{m} \right)^{\rho\cdot m} = e^{-2\rho}.
        \end{equation*}

        \item Otherwise, suppose only $u$ is matched in $M^*$ (say, by $e_u\in M^*$).
        Notice that $v$ cannot be matched in Phase 1.
        Therefore we have
        \begin{equation*}
            \textstyle q(e) = \Pr[e_u \text{ never arrives in phase 1}] = \left( 1-\frac{1}{m} \right)^{\rho\cdot m} = 
            e^{-\rho}.
        \end{equation*}
    \end{itemize}

    Therefore, we have
    \begin{align*}
        &\ \E[|\mathcal{M}(K,\varnothing,V(M_1))|] \\
        \geq &\ e^{-2\rho}\cdot \E[|\mathcal{M}(K,\varnothing,\varnothing)\cap (\hat{E}\setminus M^*)|] + e^{-\rho}\cdot \E[|\mathcal{M}(K,\varnothing,\varnothing)\cap (\bar{E}\cup M^*)|
        \\
        =&\ e^{-2\rho}\cdot \E[\hat{g}((1-\rho)m)] + e^{-\rho}\cdot \E[\bar{g}((1-\rho)m)].
    \end{align*}
    which completes the proof.
\end{proofof}

\begin{proofof}{Lemma~\ref{lemma:lower-bound-of-E[g]}}
    Recall that by definition $\E[w(0)] = n\cdot f(0) = 0$.
    Notice that $b(z)$ is decreasing in $z\in [0,1)$ and $b(z) > 0$ for all $z < (1-\epsilon)/2$.
    For $y\geq 0$ such that $b(y)\geq 0$, define
    \begin{equation*}
        B(y) = \int_0^y \frac{1}{b(z)}\ \d z.
    \end{equation*}

    For all $y>0$ we have
    \begin{equation*}
        B\left(f(y)\right) = \int_0^{f(y)} \frac{1}{b\left(f(x)\right)} \d f(x) = \int_0^{f(y)} \frac{1}{f'(x)} \d f(x) = \int_0^y \d x = y.
    \end{equation*}
    
    On the other hand, we have
    \begin{align*}
        B\left(\frac{\E[w(k)]}{n}\right) & = \int_0^{\frac{\E[w(k)]}{n}} \frac{1}{b(z)}\ \d z = \sum_{i=0}^{k-1} \int_{\frac{\E[w(i)]}{n}}^{\frac{\E[w(i+1)]}{n}} \frac{1}{b(z)}\ \d z \\
        & \geq \sum_{i=0}^{k-1} \int_{\frac{\E[w(i)]}{n}}^{\frac{\E[w(i+1)]}{n}} \frac{1}{b\left(\frac{\E[w(i)]}{n}\right)}\ \d z \\
        & = \sum_{i=0}^{k-1} \frac{1}{b\left(\frac{\E[w(i)]}{n}\right)}\cdot \left( \frac{\E[w(i+1)] - \E[w(i)]}{n} \right) \\
        & \geq \sum_{i=0}^{k-1} \frac{1}{m} = \frac{k}{m} = B\left( f\left( \frac{k}{m} \right) \right),
    \end{align*}
    where the first inequality holds since $b(z)$ is decreasing; the second inequality holds due to Inequality~\eqref{equation:lower-bound-on-increment-of-E[g(k)]}.
    Since $B(y)$ is increasing, we have $\E[w(k)]\geq n\cdot f(\frac{k}{m})$.
\end{proofof}

\begin{proofof}{Lemma~\ref{lemma:large_expected_g}}
    As shown in Section~\ref{section:Analysis_framework}, 
    given the realization of the first $k$ rounds, we have
    \begin{equation*}
        \E_{k+1}[w(k+1)] = w(k) + \frac{\Delta(k)}{m},
    \end{equation*}
    where the expectation is taken over the randomness of round $k+1$.
    Lemma~\ref{lemma:small_w_large_delta} states if $w(k)\leq \epsilon'\cdot n$ then the expected increase $w(k+1)-w(k)$ in the next round is lower bounded by $\frac{n}{m}\cdot c(\rho,\epsilon,\epsilon')$.

    Observe that for all $\delta > 0$ we have
    \begin{align*}
        \E[w(k^*)] & \geq (1-\delta) \epsilon' n\cdot \Pr\left[ w(k^*) \geq (1-\delta) \epsilon' n \right] \\
        & \geq (1-\delta) \epsilon' n\cdot \Big( 1 - \Pr\left[ w(k^*) < (1-\delta) \epsilon' n \right] \Big).
    \end{align*}

    Next, we show that the event $w(k^*) < (1-\delta) \epsilon' n$ happens with very low probability.

    Let $z(k) = w(k+1)-w(k)$.
    Note that $z(0),z(1),\ldots,z(k^*-1)$ are non-negative random variables with values bounded by $e$, and we have $w(k) = \sum_{i=0}^{k-1} z(i)$.
    In the following, we would like to give a lower bound for
    \begin{equation*}
        \Pr\left[ w(k^*) < (1-\delta) \epsilon' n \right] = \Pr\left[ \sum_{i=0}^{k^*-1}z(i) < (1-\delta) \epsilon' n \right].
    \end{equation*}
    Note that event $w(k^*) < (1-\delta) \epsilon' n$ implies $w(k) <\epsilon'\cdot n$ for all $k\leq k^*$.
    We further define $k^*$ IID random variables as follows.
    For each $i\in \{0,1,\ldots,k^*-1\}$, let
    \begin{equation*}
        \bar{z}(i) = \begin{cases}
            0, \text{ with probability } 1-\frac{n\cdot c(\rho,\epsilon,\epsilon')}{e\cdot m}; \\
            e, \text{ with probability } \frac{n\cdot c(\rho,\epsilon,\epsilon')}{e\cdot m}.
        \end{cases}
    \end{equation*}
    We have $\E[\bar{z}(i)] = \frac{n}{m}\cdot c(\rho, \epsilon, \epsilon')$ by definition.

    Then we have
    \begin{align*}
        \Pr\left[ w(k^*) < (1-\delta) \epsilon' n \right]
        &= \Pr\left[ \sum_{i=0}^{k^*-1} z(i) < (1-\delta) \epsilon' n \right]\\
        &\leq \Pr\left[ \sum_{i=0}^{k^*-1} \bar{z}(i) < (1-\delta) \epsilon' n \right]\\
        &= \Pr\left[\sum_{i=0}^{k^*-1} \bar{z}(i) < \E\left[ \sum_{i=0}^{k^*-1}\bar{z}(i) \right] - \delta \epsilon' n \right]\\
        &\leq \frac{\sum_{i=0}^{k^*-1}\text{Var}(\bar{z}(i))}{(\delta\cdot \epsilon'\cdot n)^2} \\ & \leq \frac{k^*\cdot e \cdot n/m\cdot c(\rho,\epsilon,\epsilon')}{(\delta\cdot \epsilon'\cdot n)^2}=\frac{e}{\delta^2\cdot \epsilon'\cdot n},
    \end{align*}
    where the second inequality holds by Chebyshev's inequality and the last inequality holds by Bhatia-Davis inequality.
    Finally, by setting $\delta = 1/\sqrt[3]{n}$, we have
    \begin{equation*}
        \E[w(k^*)] \geq \left( 1-\frac{1}{\sqrt[3]{n}} \right) \epsilon' n\cdot \left( 1 - \frac{e}{\epsilon'\cdot \sqrt[3]{n}} \right) = (1-o(1))\cdot \epsilon'\cdot n,
    \end{equation*}
    which completes the proof.
\end{proofof}

\vspace*{5pt} 
\textbf{Calculation of $E[\OPT]$ in the Hard Instance: }
    % Next, we compute $\E[\OPT]$.
    Observe that we have $\OPT = 1$ if and only if one of the following two events happens:
    \begin{itemize}
        \item $A_1$: All online edges share a common endpoint.
        \item $A_2$: The online edges form a triangle.
    \end{itemize}

    Note that
    \begin{align*} 
        \Pr[A_1] & = \sum_{u\in V} \Pr[\text{all edges share vertex } u] - \sum_{e\in E} \Pr[\text{only } e \text{ realizes}] \\
        & = 4\cdot (3/6)^6 - 6\cdot (1/6)^6 = 485/7776.
    \end{align*}

    Let $T\subseteq E$ be the set of all triangles on $G$.
    Note that $|T| = 4$.
    The probability that the online edges form a triangle is
    \begin{align*}
        \Pr[A_2] & = \sum_{T\subseteq E} \bigg( \Pr[\text{all online edges are sampled in } T] \cdot \\
        & \qquad \Pr[\text{each $e$ in $T$ appears at least once} \mid \text{all edges are sampled in } T] \bigg) \\
        & = 4\cdot \left( (3/6)^6 \cdot \frac{3^6 - 3\cdot 2^6 + 3\cdot 1^6}{3^6} \right) = \frac{5}{108}.
    \end{align*}

    Further, notice that events $A_1$ and $A_2$ are disjoint, we have
    \begin{equation*}
        \Pr[\OPT=1] = \Pr[A_1] + \Pr[A_2] = 485/7776 + 5/108 = 845/7776.
    \end{equation*}
    
    Therefore, we can derive $\E[\OPT]$ as
    \begin{equation*}
        \E[\OPT] = \Pr[\OPT=1] + 2\cdot \Pr[\OPT=2] = 845/7776 + 2\cdot (1 - 845/7776) > 1.891332.
    \end{equation*}

\end{document}